\newtheorem{theorem}{Theorem}%[section]
\DeclareMathOperator{\Q}{\mathbb{Q}}
\DeclareMathOperator{\R}{\mathbb{R}}
\DeclareMathOperator{\Hi}{\mathscr{H}}
\DeclareMathOperator{\Z}{\mathbb{Z}}
\DeclareMathOperator{\hht}{ht}
\DeclareMathOperator{\csosone}{\mathtt{csos1}}
\DeclareMathOperator{\csostwo}{\mathtt{csos2}}
\DeclareMathOperator{\csosthree}{\mathtt{csos3}}
\DeclareMathOperator{\sep}{sep}
\DeclareMathOperator{\Disc}{Disc}
\DeclareMathOperator{\trace}{tr}
\DeclareMathOperator{\N}{\mathbb{N}}
\newcommand{\sdpfun}[3]{\texttt{sdp}(#1,#2,#3)}
\newcommand{\sdp}{\texttt{sdp}}
\newcommand{\nsdp}{n_\text{sdp}}
\newcommand{\msdp}{m_\text{sdp}}
\newcommand{\roundfun}[2]{\texttt{round}(#1,#2)}
\newcommand{\ldlfun}[1]{\texttt{ldl}(#1)}
\newcommand{\cholesky}{\texttt{cholesky}}
\newcommand{\choleskyfun}[3]{\texttt{\cholesky}(#1,#2,#3)}
\newcommand{\hasrealrootoncircle}{\texttt{hasrealrootoncircle}}
\newcommand{\rootsfun}[2]{\texttt{complexroots}(#1,#2)}
\newcommand{\coeffs}{\texttt{coeffs}}
\def\ucircle{\mathscr{C}}
\def\C{\mathbb{C}}
\begin{document}

\title[Exact SOHS Decompositions of Trigonometric Univariate  Polynomials]{
Exact SOHS Decompositions of Trigonometric Univariate Polynomials with Gaussian Coefficients}

\author{Victor Magron}
%\authornote{Both authors contributed equally to this research.}
%\email{@}
%\orcid{1234-5678-9012}
%\authornotemark[1]
%\email{}
\affiliation{%
  \institution{CNRS, LAAS}
  \streetaddress{7 avenue du Colonel Roche}
  \city{Toulouse}
  %\state{Ohio}
  \country{France}
  \postcode{F-31400}
}

\author{Mohab Safey El Din}
\affiliation{%
  \institution{Sorbonne Universit\'e, CNRS, LIP6}
  \streetaddress{F-75005, Paris, France}
  \city{Paris}
  \country{France}}
%\email{larst@affiliation.org}

\author{Markus Schweighofer}
\affiliation{%
  \institution{University of Konstanz}
  \city{Konstanz}
  \country{Germany}
}

\author{Trung Hieu Vu}
\affiliation{%
    \institution{Sorbonne Universit\'e, CNRS, LIP6}
    \streetaddress{F-75005, Paris, France}
    \city{Paris}
    \country{France}}

\begin{abstract}
%Certifying the positivity of Hermitian trigonometric polynomials is essential for design problems in discrete-time signal processing.
%
%It is well known from the Riesz-Fej\'{e}r spectral factorization theorem that positive trigonometric polynomials coincide with sums of Hermitian squares (SOHS) with a single term.
%
%Based on the proof, one has an algorithm to compute the square in which we have to compute exactly all complex roots of the polynomial.
%In fact, computing such exact roots are difficult in general.
%In this paper, we propose exact algorithms to decompose Hermitian complex univariate polynomials being positive on the unit circle with Gaussian coefficients and estimate their bit complexities as well.
%NEW
  Certifying the positivity of trigonometric polynomials is of first importance for
  design problems in discrete-time signal processing.
  It is well known from the Riesz-Fej\'{e}r spectral factorization theorem that
  any trigonometric univariate  polynomial non-negative on the unit circle can be decomposed as a
  Hermitian square with complex coefficients. Here we focus on the case of
  polynomials with Gaussian integer coefficients, i.e., with real and imaginary
  parts being integers.

  We design, analyze and compare, theoretically and practically, three hybrid
  numeric-symbolic algorithms computing weighted sums of Hermitian squares
  decompositions for trigonometric univariate polynomials positive on the unit circle with
  Gaussian coefficients. The numerical steps the first and second algorithm rely
  on are complex root isolation and semidefinite programming, respectively. An
  exact sum of Hermitian squares decomposition is obtained thanks to compensation techniques. The
  third algorithm, also based on complex semidefinite programming, is an
  adaptation of the rounding and projection algorithm by Peyrl and Parrilo.

  For all three algorithms, we prove bit complexity and output size estimates
  that are polynomial in the degree of the input and linear in the maximum
  bitsize of its coefficients.
  We compare their performance on randomly chosen
   benchmarks, and further design
  a certified finite impulse filter.
\end{abstract}
\thanks{%
  The authors are supported by the European Union's Horizon 2020 research and
  innovation programme under the Marie Sk\l{}odowska-Curie grant agreement N.
  813211 (POEMA)}
\begin{CCSXML}
<ccs2012>
   <concept>
       <concept_id>10002950.10003714.10003716.10011138.10010042</concept_id>
       <concept_desc>Mathematics of computing~Semidefinite programming</concept_desc>
       <concept_significance>500</concept_significance>
       </concept>
 </ccs2012>
\end{CCSXML}

\ccsdesc[500]{Mathematics of computing~Semidefinite programming}

\keywords{Hybrid symbolic-numeric algorithm, positive trigonometric polynomials,
  Gaussian coefficients, semidefinite programming, root isolation, filter
  design}
\maketitle

\section{Introduction}
%We follow terminologies in the book \cite{dumi2017}.

In this paper, we denote by (resp. $\N$) $\Z$ the set of (resp. non-negative)
integers and by $\Q$ (resp. $\Q_+$), $\R$ and $\C$ the fields of rational (resp. positive rational), real and complex
numbers.
%Let $\Q_+$ stand for the set of positive rational numbers.
For a complex variable or number $f$, we denote by $\bar{f}$ the
associated conjugate variable or number.

Let $\Hi[z]$ be the set of trigonometric univariate polynomials defined as a subset of Laurent polynomials with complex coefficients and
complex variable $z$ as follows:
%
%\textcolor{red}{New definition for $\Hi[z]$}
\begin{equation}\label{Cfz}
f(z) =  f_0+\left(\frac{f_1}{z}+\bar f_1z\right) +\dots+\left(
    \frac{f_d}{z^d}+\bar f_dz^d\right),
\end{equation}
with $f_0 \in \R$ and $d \in \N$.
By convention, when $f_d\neq 0$, $d$ is the degree of $f$; the degree of the
zero polynomial is $-\infty$.
%Observe that the constant coefficient of $f$ lies in $\R$.
% where $\bar f_k$ is the conjugate complex number of $f_k$, $k=1,\dots,d$, $f_0\in \R$, and $d$ is the degree of $f$.
%
% \textcolor{red}{Then we are allowed also to say that all polynomials in $\Hi[z]$
%   are Hermitian. Elaborate and explain why we define things this way ($\bar z =
%   \frac{1}{z}$ on the unit circle). }

In this paper, since we work with base fields of characteristic zero, we see
more polynomials through the evaluation maps they define than as algebraic
objects. Note that for $f \in \Hi[z]$, the restriction of the map $\zeta \mapsto
f(\zeta)$ over the {\em unit circle} $\ucircle := \{\zeta\in\C:|\zeta|=1\}$
coincides with the evaluation map defined by the polynomial
\begin{equation*}
g(z) = f_0  +\left({f_1}{\bar z}+\bar f_1z\right) +\dots+\left(
 {f_d}{\bar z}^d+\bar f_dz^d\right),
\end{equation*}
since $\bar \zeta = \zeta^{-1}$ for $\zeta \in \ucircle$. %Further, we say that the above Laurent polynomial is associated to $f$.
Note also that for any $\zeta\in \ucircle$, $g(\zeta)\in \R$ so that $g$ is a
{\em Hermitian} polynomial.
Finally, note that for any
Hermitian polynomial $g$, there exists $f \in \Hi[z]$ such that the restrictions
to $\ucircle$ of the maps $\zeta \mapsto g(\zeta)$ and $\zeta \mapsto f(\zeta)$
coincide (this is due to the fact that all points in $\ucircle$ satsify $ \zeta \bar \zeta
= 1$).

% This polynomial has real values over the unit circle because of $\bar z =
% z^{-1}$. Furthermore, if we change the variable with $z=e^{i\omega}$, where
% $\omega\in [-\pi,\pi]$, then we obtain a polynomial (in the usual sense) in
% variable $\cos \omega$. %(actually, $\cos \omega=\frac{z+z^{-1}}{2}$),
% That explains why we say $f$ is a trigonometric polynomial.

% \textcolor{red}{Over the unit circle, all Hermitian polynomials ``look like''
%   (reduce to) the ones in \eqref{Cfz}. By ``reduce'', I mean reduction modulo
%   the constraint $z\bar{z} - 1$. In the end, we are actually able to certify
%   (via the computation of witnesses) non-negativity of all Hermitian polynomials
%   over the unit circle. }

For $h(z) = \sum_{k=0}^d h_k z^k\in \C[z]$, we define $h^\star(z) = \sum_{k=0}^d
\bar h_k  z^{-k}$.

% \textcolor{red}{No need of ``causal polys.''} A \textit{causal} polynomial is of
% the form $h(z)=\sum_{k=0}^{d}h_{k} z^{j}$ \textcolor{red}{classical poly};
% $\Hi_+[z]$ stands for the set of causal polynomials. We denote by
% $h^\star(z)=\sum_{k=0}^{d} \bar h_{k} z^{j}$ \textcolor{red}{redefined to
%   $h^\star(z)=\sum_{k=0}^{d} \bar h_{k} \bar{z}^{j}$} the polynomial obtained by
% conjugating each term of $h$.
% % A polynomial $f$ as above is Hermitian as $f = f^\star$.

% \if Given a polynomial $f(z) =\sum_{k=-d}^{d} f_k z^{-k}$ in $\Hi(\Q)[z]$, we
% denote by $f^\star(z) =\sum_{k=-d}^{d} \bar f_k z^{k}$ the Laurent polynomial
% obtained by conjugating each term of $f$. A Laurent polynomial $f$ in
% $\Hi(\Z)[z]$ of degree $d$ is said to be {\em Hermitian} if $f = f^\star$, i.e.,
% $f(z)=\sum_{k=-d}^{d} f_k z^{-k}$, with $f_{-k}=\bar f_k$, for each
% $k=-d,\dots,d$. We compute exact certificates of non-negativity of $f$ over the
% unit circle. \fi

One says that $f$ is a \textit{sum of Hermitian squares} (SOHS) if there exist
some $r \in \N-\{0\}$ and polynomials $s_1,\dots,s_r $ in $\C[z]$ such that
$$f(z)=\sum_{j=1}^{r} s_j (z) s_j^\star ({z}) \,.$$
This terminology of Hermitian squares comes from the above discussion as $s_j^\star( \zeta) = s_j (\bar \zeta)$ for all $\zeta \in \ucircle$.
By the Riesz-Fej\'{e}r spectral factorization theorem (see, e.g., \cite[Theorem
1.1]{dumi2017}), any trigonometric univariate polynomial which is non-negative over the unit
circle $\ucircle$ can be written as an Hermitian square, i.e., an SOHS with a single term.

Its proof \cite[pp. 3--5]{dumi2017} is constructive but requires to manipulate
exactly all $2d$ complex roots of the associated Laurent polynomial. Usually,
this algorithm is applied with approximate computations, leading to approximate
certificates of non-negativity over $\ucircle$.

The subset of $\Hi[z]$ with \textit{Gaussian integers} coefficients, i.e., all
coefficients $f_i$ lie in $\Z + i \Z$ (where $i$ is the standard imaginary unit)
is denoted by $\Hi(\Z)[z]$. In this paper, we focus on the computation of {\em
  exact} certificates of non-negativity of polynomials in $\Hi(\Z)[z]$ by means
of {\em exact} SOHS decompositions.

%Note that for the multivariate case,

Our motivation comes from design problems in discrete-time signal processing. In
particular, for the design of finite impulse response (FIR) filters in signal
processing, minimizing the stopband energy is a crucial issue \cite[Chapter
5]{dumi2017}. Computing exact SOHS decompositions of trigonometric univariate polynomials
in this context appears to be a natural computational issue.

\paragraph*{Our contribution}
We design three exact algorithms to compute SOHS decompositions of polynomials
in $\Hi(\Z)[z]$ that are \textit{positive} over the unit circle $\ucircle$.
These algorithms are based on perturbation-compensation or rounding-projection
techniques. We analyze their bit complexities and output size as well.

We use the height of a polynomial with rational coefficients to measure its
\textit{bitsize} that is defined as follows. The bitsize of an integer $b$ is
denoted by $\hht(b):=\lfloor\log_2(|b|)\rfloor+1$ with $\hht(0):= 1$, where
$\log_2$ is the binary logarithm. Given $a\in\Z$ and $b\in\Z$ with $b\neq 0$ and
$\gcd(a,b)=1$, we define $\hht\left ({a} / {b} \right )=\max( \hht(a),\hht(b))$. We
define the bitsize of a Gaussian rational number as $\hht(a+ib)=\max(\hht(a),
\hht(b) )$, where $a,b\in \Q$. For a non-zero polynomial $f$ with Gaussian
rational coefficients, we define $\hht(f)$ as the maximum bitsize of the
non-zero coefficients of $f$.

For two maps $p,q:\N^m\to \R$, one writes ``$p(v)=O(q(v))$" when there exists
$b\in \N$ such that $p(v)\leq b q(v)$, for all $v\in \N^m$. We use the notation
$p(v)= \widetilde O (q(v))$ when $p(v)=O(q(v)\log^kq(v))$
for some $k\in \N$.

%Assume that the input of the algorithms is $f\in \Hi(\Z)[z]$ being positive over the unit circle of degree $d$ and bitsize $\tau$,

The first algorithm we design, called $\csosone$, is a perturbation-compensation
one in which the numerical step computes an approximate SOHS decomposition for a
well-chosen perturbation of the input polynomial with complex root isolation.
% The first obtained result concerning the complexity of $\csosone$ can be stated
% as follows:

\vspace*{-0.1cm}

\begin{theorem}\label{thm:bit1} Let $f\in \Hi(\Z)[z]$ be positive on $\ucircle$ of degree $d$ and coefficients of maximum bitsize $\tau$. There exists
  an Algorithm $\csosone$ which on input $f$ computes an SOHS decomposition of
  $f$ with Gaussian (or Gaussian modulus)  coefficients using at most $\widetilde{O}\left (d^6(d+\tau) \right )$
%  \[ \widetilde{O}\left (d^6(d+\tau) \right )\]
  bit operations. In addition, the maximum bitsize of the output coefficients is
  bounded from above by $\widetilde{O}(d^5(d+\tau))$.
  % \begin{description}
  % \item[\rm a)] Algorithm $\csosone$ runs in $\widetilde{O}(d^6(d+\tau))$
  %   boolean operations;
  % \item[\rm b)] The maximal bitsize of the output coefficients is bounded from
  %   above by $\widetilde{O}(d^5(d+\tau))$.
  % \end{description}
\end{theorem}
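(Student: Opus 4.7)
The plan is to mirror the constructive proof of the Riesz--Fej\'er theorem but with approximate roots, then recover exactness by a compensation argument. First, I would associate to $f\in\Hi(\Z)[z]$ the ordinary polynomial $F(z):=z^d f(z)\in(\Z+i\Z)[z]$, which is of degree $2d$ and has height $O(\tau)$. Because $f$ is Hermitian and strictly positive on $\ucircle$, the roots of $F$ come in pairs $(\alpha,1/\bar\alpha)$ with none lying on $\ucircle$, so exactly $d$ of them sit strictly outside the unit disk. Invoking $\roots$ on $F$ with a target absolute precision $\varepsilon>0$ produces Gaussian rational approximations $\tilde\alpha_1,\dots,\tilde\alpha_d$ of those $d$ roots, from which we build
\[\tilde s(z)=\sqrt{|f_d|}\,\prod_{j=1}^{d}(z-\tilde\alpha_j)\in\Q[i][z].\]
The key inequality to establish is $f=\tilde s\,\tilde s^\star+r$ with $\|r\|_\infty=O(\varepsilon\cdot\mathrm{poly}(d,\tau))$ on $\ucircle$, where $\|\cdot\|_\infty$ denotes the sup norm on $\ucircle$.

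The compensation step exploits a quantitative lower bound $\mu:=\min_{\zeta\in\ucircle}f(\zeta)>0$. Using Mahler/Cauchy bounds applied to $F$, whose nonzero coefficients have bitsize at most $\tau$, one gets $\log(1/\mu)=\widetilde O(d+\tau)$ and an analogous lower bound $\log(1/\sep(F))=\widetilde O(d+\tau)$ on the root separation. Choosing $\varepsilon$ so that $\log(1/\varepsilon)=\widetilde O(d(d+\tau))$ guarantees both that $\roots$ returns the correct clusters of roots and that $\|r\|_\infty<\mu/2$, so that the residue polynomial $r$ is itself positive on $\ucircle$. Because $r$ has Gaussian rational coefficients whose bitsize is controlled and because $r$ is ``flat'' on $\ucircle$, a straightforward Hermitian decomposition (for instance writing $r=r_0+\sum_{k\geq 1}(r_kz^k+\bar r_k z^{-k})$ and dominating the off-diagonal part by the constant term via the triangle inequality) exhibits $r$ as an SOHS with Gaussian modulus coefficients. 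Adding this decomposition to $\tilde s\,\tilde s^\star$ yields the exact SOHS certificate for $f$.

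For the bit complexity, the dominant cost is the call to $\roots$ with precision $\log(1/\varepsilon)=\widetilde O(d(d+\tau))$ on a polynomial of degree $2d$ and bitsize $O(\tau)$; state-of-the-art complex root isolation (Pan--Sch\"onhage type) costs $\widetilde O(d\cdot(d\tau+d\log(1/\varepsilon)))=\widetilde O(d^6(d+\tau))$ bit operations once one accounts for the Newton refinement needed to reach that precision. The remaining arithmetic---expanding $\tilde s\,\tilde s^\star$, computing $r$, and extracting its SOHS form---is dominated by this cost. The output bitsize is bounded by the bitsize of the coefficients of $\tilde s$, each of which is a symmetric function of $d$ Gaussian rationals of bitsize $\widetilde O(d(d+\tau))$, giving the announced $\widetilde O(d^5(d+\tau))$ bound after careful accounting of the binomial blow-up.

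The hard part, in my view, is calibrating the precision $\varepsilon$: one must simultaneously (i) certify that the $d$ computed clusters isolate precisely the roots outside $\ucircle$, which requires a root-separation bound, and (ii) certify that the floating residue $r$ remains positive on $\ucircle$, which requires the sup-norm perturbation bound via $\mu$. Both depend on explicit and sharp quantitative versions of Mahler-type inequalities for polynomials in $(\Z+i\Z)[z]$; any slack here propagates into a higher power of $d$ in the final complexity. A secondary subtlety is guaranteeing that the compensation SOHS for $r$ produces coefficients in $\Z+i\Z$ (or their moduli) rather than general Gaussian rationals, which forces a final rounding step whose error must again be absorbed inside the margin $\mu$.
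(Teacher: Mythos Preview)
Your overall architecture---approximate Riesz--Fej\'er via complex root isolation, then compensate the residue by the elementary identity $u_k z^{-k}+\bar u_k z^k=|u_k|(z^k+u_k/|u_k|)(z^{-k}+\bar u_k/|u_k|)-2|u_k|$---is exactly the route the paper takes. The genuine gap is your quantitative control of $\mu=\min_{\zeta\in\ucircle}f(\zeta)$. You assert $\log(1/\mu)=\widetilde O(d+\tau)$ ``via Mahler/Cauchy bounds'', but neither Mahler's measure inequality nor Cauchy's root bound says anything about the minimum modulus of a polynomial on $\ucircle$; they bound root magnitudes and root separation, not how close a root can sit to the unit circle. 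The paper obtains the correct estimate $\log(1/\mu)=\widetilde O(d^3(d+\tau))$ by a completely different mechanism (Lemma~\ref{lm:min}): it writes the minimum as a critical value of the bivariate polynomial $p(x,y)=f(x+iy)$ restricted to $x^2+y^2=1$, bounds the bitsize of a zero-dimensional parametrization of the critical locus, and then invokes the Cauchy bound on the resulting univariate eliminant. This $d^3(d+\tau)$ is precisely what propagates into the required root-isolation precision and ultimately produces the $d^6(d+\tau)$ bit complexity; with your claimed $\widetilde O(d+\tau)$ bound and your stated root-isolation cost $\widetilde O(d\cdot(d\tau+d\log(1/\varepsilon)))$, your own arithmetic yields $\widetilde O(d^3(d+\tau))$, not $\widetilde O(d^6(d+\tau))$, so your derivation does not actually support the theorem's exponent.

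Two smaller points. First, your separation estimate $\log(1/\sep(F))=\widetilde O(d+\tau)$ is also off: Mahler's inequality gives $\sep(F)\geq\sqrt{3}\,d^{-d/2-1}\|F\|^{1-d}$, hence $\log(1/\sep(F))=\widetilde O(d\tau)$, and in the paper this is applied to $z^d(f-\varepsilon)$ whose height is already $\widetilde O(d^3(d+\tau))$ because of the perturbation. Second, the paper does not work with $f$ directly but first certifies a rational $\varepsilon$ with $f-\varepsilon>0$ on $\ucircle$ (via a bivariate real-root test) and then isolates the roots of $f-\varepsilon$; the slack $\varepsilon$ becomes the explicit budget in the termination test $\varepsilon>u_0+2\sum_k|u_k|$, which avoids having to separate ``inside'' from ``outside'' roots and sidesteps the irrational scalar $\sqrt{|f_d|}$ in your $\tilde s$ (the paper uses a rational normalizing factor $a=f_0/F_0$ instead).
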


%The upcoming theorem concerns the complexities of Algorithms $\csostwo$ and the  $\csosthree$.
The two other algorithms, called $\csostwo$ and $\csosthree$, are based on
complex \emph{semidefinite programming} (SDP).
SDP consists of minimizing a linear function over a set of matrices constrained to have non-negative eigenvalues; see  \cite{wolkowicz2012handbook}.
In $\csostwo$, we compute an
approximate SOHS decomposition for the perturbation by using complex SDP
solving. Algorithm $\csosthree$ is an adaptation of the rounding-projection
algorithm raised by Peyrl and Parrilo \cite{peyrl2008}. These algorithms are
more expensive compared to the first one because we replace complex root
isolation by complex SDP solving. Despite their worse complexity, they allow one
to handle constrained optimization problems and to design filters.

\begin{theorem}\label{thm:bit2}
%  \textcolor{red}{Statement to be revised}
  Let $f\in \Hi(\Z)[z]$ be positive on $\ucircle$ of degree $d$ and
  coefficients of maximum bitsize $\tau$. There exist algorithms $\csostwo$ and
  or $\csosthree$ which on input $f$ output an SOHS decomposition of $f$ with (modulus) of Gaussian coefficients
  using at most
  $\widetilde{O}(d^{13}(d+\tau)^2)$ 
  bit operations. In addition, the maximal bitsize of the output
  coefficients is bounded from above by $\widetilde{O}(d^6(\tau+d))$.
%   with Gaussian (or square
%   roots of positive rational) coefficients.
% \begin{description}
%     \item[\rm a)] Algorithm $\csostwo$ runs in $\widetilde{O}(d^{13}(d+\tau)^2)$ boolean operations,
%     and the maximal bitsize of the involved Gaussian coefficients is bounded from
%     above by $\widetilde{O}(d^3(d+\tau))$.
%     \item[\rm b)] Algorithm $\csostwo$ runs in $\widetilde{O}(d^{13}(d+\tau)^2)$ boolean operations, and the maximal bitsize of the involved Gaussian coefficients is bounded from
%     above by $\widetilde{O}(d^6(\tau+d))$.
% \end{description}
\end{theorem}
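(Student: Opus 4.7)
The plan is to argue that both $\csostwo$ and $\csosthree$ reduce the exact SOHS problem for a positive $f\in\Hi(\Z)[z]$ to solving a complex semidefinite program at a controlled bit precision $L$ and then recovering an exact decomposition from the numerical SDP output. The complexity bound will therefore split into (i) the cost of the complex SDP solver, expressed in the Gram-matrix size $\nsdp=O(d)$, the number of affine constraints $\msdp=O(d)$, and the target precision $L$, and (ii) the cost of exact Cholesky, rounding and projection on $L$-bit Gaussian rationals. The driver of $L$ is a lower bound on $\mu_f:=\min_{\zeta\in\ucircle}f(\zeta)$; a Mahler-type estimate for positive $f\in\Hi(\Z)[z]$ gives $\mu_f\geq 2^{-\widetilde{O}(d+\tau)}$, and this is the amount of slack one must preserve through rounding or projection.

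For $\csostwo$, I would perturb the input to $\tilde f=f-\epsilon p$ for a fixed Hermitian-square polynomial $p$ and $\epsilon$ slightly smaller than $\mu_f/\mu_p$, so that $\tilde f$ stays strictly positive on $\ucircle$. The SDP associated with $\tilde f$ is then solved to bit precision $L=\widetilde{O}(d+\tau)$; a rounded Cholesky factor of the numerical Gram matrix produces an SOHS whose residue with $\tilde f$ is of order $2^{-L}$ and can be absorbed by the compensating term $\epsilon p$. This yields $f$ as a sum of Gaussian-rational Hermitian squares. For $\csosthree$, I would follow the Peyrl--Parrilo scheme: solve the SDP for $f$ directly at precision $L$, round the Hermitian Gram matrix entrywise to Gaussian rationals, and orthogonally project onto the affine subspace cut out by the coefficient equations of $f$. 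Success of the projection requires the rounding error to stay below the minimum eigenvalue of the SDP optimum, which is again controlled by $\mu_f$, so $L=\widetilde{O}(d+\tau)$ suffices; a final exact Cholesky on the projected rational matrix produces the decomposition.

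The SDP step dominates the bit complexity. Invoking a standard primal--dual interior-point method for complex SDP with $\nsdp,\msdp=O(d)$ and accuracy $2^{-L}$, and then tracking the iteration count, the per-iteration linear algebra, and the bit-cost per arithmetic operation at word length $\widetilde{O}(L)$, gives the announced $\widetilde{O}(d^{13}(d+\tau)^2)$ bit-operation bound. The output-bitsize bound $\widetilde{O}(d^6(d+\tau))$ follows from the size of the rounded Gram-matrix entries and the growth incurred by the exact Cholesky factorisation.

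The hard part is the precision analysis that links $L$ to $\mu_f$ simultaneously for both variants: one must show that a $2^{-L}$-accurate Gram matrix survives either the compensation trick or the affine projection without losing positive definiteness. This reduces in each case to comparing the spectral gap of the SDP optimum (of order $\mu_f$, or of order $\epsilon$ after perturbation) against the rounding perturbation. Since both quantities scale as $2^{-\widetilde{O}(d+\tau)}$, the same $L$ works for both algorithms, and the two complexity statements of the theorem follow at once.
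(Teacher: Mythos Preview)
Your high-level architecture is the same as the paper's: perturb--compensate for $\csostwo$, round--project for $\csosthree$, with the required SDP precision governed by a lower bound on $\mu_f=\min_{\zeta\in\ucircle}f(\zeta)$. The gap is in that lower bound.

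You assert that ``a Mahler-type estimate'' yields $\mu_f\ge 2^{-\widetilde O(d+\tau)}$, hence $L=\widetilde O(d+\tau)$. Mahler's inequality controls root \emph{separation}, not the minimum value of a positive polynomial on a real algebraic set; no such linear bound follows from it. The paper obtains only $\mu_f\ge 2^{-\widetilde O(d^3(d+\tau))}$ by passing to the real bivariate polynomial $p(x,y)=f(x+iy)$ and bounding the critical values of $p$ restricted to $x^2+y^2=1$ through a zero-dimensional parametrization (Lemma~\ref{lm:min}, then Lemmas~\ref{lm:epsN} and~\ref{lm:deltaR}). That extra factor $d^3$ in the exponent of the precision is precisely what produces the exponents in Theorem~\ref{thm:bit2}.

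In fact your own arithmetic contradicts your precision claim. With $\nsdp,\msdp=O(d)$ and target accuracy $2^{-L}$, the ellipsoid bound used in the paper is $\widetilde O\big(\nsdp^4 L\cdot \nsdp^2(\msdp+\nsdp)\cdot L\big)$ bit operations. Plugging in $L=\widetilde O(d+\tau)$ gives $\widetilde O(d^7(d+\tau)^2)$, not $\widetilde O(d^{13}(d+\tau)^2)$; and an exact $LDL^T$ on an $(d{+}1)\times(d{+}1)$ matrix with $L$-bit entries gives output bitsize $\widetilde O(d^3(d+\tau))$, not $\widetilde O(d^6(d+\tau))$. The stated bounds only appear once $L=\widetilde O(d^3(d+\tau))$. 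So either you must supply an actual proof that $\mu_f\ge 2^{-\widetilde O(d+\tau)}$ (which would strictly \emph{improve} the theorem, and is not known), or you must use the paper's $\widetilde O(d^3(d+\tau))$ precision and redo the count.
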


These algorithms have been implemented using the {\sc Julia} programming
language \cite{bezanson2017julia}.
We report on practical experiments that Algorithm $\csosone$ runs faster than the other algorithms; that coincides with the obtained complexity.
Furthermore, we rely on $\csosthree$ to design filters in a certified way.

\textit{Related works.} Computation of exact weighted sums of squares decompositions
of a univariate polynomial $f\in\Q[z]$ has been studied in
\cite{schweighofer1999,chevillard2011,univsos}.
Many of the techniques developed
in this paper are borrowed from these previous works.
We also mention \cite{krick2021univariate} which allows to compute certificates of
non-negativity of univariate polynomials sharing common real roots with another
univariate polynomials.
Note that computing sums of squares decompositions of
non-negative univariate polynomials with rational coefficients is easier from a complexity viewpoint, according to
the estimate in \cite[Theorem 4.4]{univsos}.

% The authors in
% \cite{schweighofer1999,chevillard2011} designed two algorithms and their bit
% complexities have been analyzed in \cite{univsos}; Compared to the algorithm in
% \cite{chevillard2011}, its running time given in \cite[Theorem 24]{univsos},
% namely $\widetilde{O}(d^3(d+\tau))$, is faster than that of $\csosone$ (Note
% that $\csosone$ handles constrained problems).
% In \cite{krick2021univariate}, a certificate of the nonnegativity of a
% polynomial on the real roots of another nonzero polynomial is shown and two
% algorithms to certify the nonnegativity are designed as well.

It should be noted that our SOHS decomposition problems
can be translated into sums of squares decompositions of bivariate polynomials with real
variables, which are positive on $\ucircle$.
This is a topical computational issue popularized by the original
papers of Lasserre and Parrilo \cite{lasserre2001,parrilo2000} allowing to
compute approximate sums of squares decompositions based on semi-definite
programming.
Hybrid symbolic-numeric turning these approximate certificates to
exact ones are given in
\cite{peyrl2008,kaltofen2012,guo2012,magron2021exact,magron2021sum}.
Exact certificates for several special families of polynomials have been given,
for instance SAGE/SONC polynomials \cite{magron2019,wang2020second}, polynomials
lying in the interior of the SOS cone \cite{magron2021exact}, and polynomials
whose gradient ideals are zero-dimensional and radical \cite{magron2021sum}.

% It turns out that certifying the positivity of a Hermitian polynomial on the
% unit circle boils down to certifying the positivity of a real bivariate
% polynomial $p(x,y)$ (with rational coefficients) over the curve $x^2+y^2=1$. For
% unconstrained bivariate problems satisfying a certain generic condition,
% Algorithm \texttt{sosgradient} in \cite{magron2021sum} certifies the positivity
% in $\widetilde{O}\left(d^{12}(\tau+d)\right)$ boolean operations
% \cite[Theorem~4.5]{magron2021sum}.

Applying those results to the bivariate setting we can reduce our problems to,
\cite[Theorem 16]{magron2021exact} yields bit complexity and output bitsize
estimates which are exponential in the input degree of $f$ and in the maximum
bitsize of its coefficients.

% Reznick \cite{reznick1995uniform} showed an explicit degree bound for positivity
% of a homogeneous polynomial over the \textit{unit sphere}. The two first authors
% designed an algorithm to verify the positivity and analyze its bit complexity;
% From \cite[Theorem 16]{magron2021exact}, the maximum bitsize of the coeﬃcients
% involved in the decomposition and the boolean running time of the procedure are
% both bounded from above by $2^{O(\tau(4d)^{11})}$.

Note that in our case, the degree of the SOHS decomposition is known in advance and we estimate the bit complexity of the coefficients involved in this decomposition.
In the multivariate case, the situation is more delicate and providing degree bounds is already a challenge.
Recent efforts have been pursued for polynomials positive over the \textit{unit sphere} \cite{reznick1995uniform,fang2021sum}, or for more general closed sets of constraints defined by finitely many polynomials \cite{baldi2021moment,mai21}.
Theorem~2 in \cite{fang2021sum} provides us a degree bound  that is linear in the number of variables
%, for sum of squares representations for matrix-valued polynomials on the unit sphere.
while Theorem 1.7 in \cite{baldi2021moment} and Corollary 1 in \cite{mai21} give bounds that are polynomial in the input degrees and exponential in the number of variables.
%of the truncated quadratic module that ensures the positivity of a polynomial over a basic closed semialgebraic set; the bound is polynomial in the input degrees and exponential in the number of variables.}

\if
To prove the above estimates on both the running time and bitsize of the coefficients in the SOHS decomposition, we introduce a lower bound for the minimum of $f$ on the unit circle:
$f-\frac{1}{2^{n}}$ is positive over the unit circle
for some $n=\widetilde{O}(d^3(d+\tau))$ (Lemma \ref{lm:epsN}).
\fi

\if{
%TODO arxiv
\paragraph*{Structure of the paper}
 In the next section, we provide
  preliminary results used in the paper. Sections
 \ref{sec:csos1}--\ref{sec:csos3} introduce Algorithms $\csosone$, $\csostwo$, and $\csosthree$, respectively,
and analyze theirs bit complexities. Practical experiments are given in the last section.
}\fi
\section{Auxiliary results} \label{sec:prelim}

For a polynomial $f = f_0+\cdots+ f_dz^d \in \C[z]$ of degree $d$, the
\textit{minimal distance} between the roots $\alpha_1, \ldots, \alpha_d$ of $f$
is defined by
$$\sep(f):=\min\{|\alpha_i-\alpha_j|, \, \alpha_i\neq \alpha_j\}.$$
The \textit{norm} of $f$ is defined as $\|f\|:=|f_d|+\dots+|f_0|$.
The following lemma is an immediate consequence of the discussion following the corollary of Theorem 2 from \cite{mahler1964}.

% \cite[Theorem 2]{mahler1964} provides a lower bound on $\sep(f)$ for $f \in
% \C[z]$. Let $\Z[i]$ stands for the set of complex numbers with real and
% imaginary parts being integers. The following lemma generalizes \cite[Theorem
% 2]{mahler1964} to the case where $f \in \Z[i][z]$, i.e., when the coefficients
% of $f$ belong to $\Z[i]$.
% %holds for the case that the coefficients are complex numbers with integers and contained in $\Z[i]$.

\begin{lemma}\label{lm:sepZ}
Let $f\in \Z[i][z]$ of degree $d$ and $\tau$
be the maximum bitsize of its coefficients. 
Assume that $f$ has no multiple roots. 
The minimal distance between the roots of $f$ satisfies
    \begin{equation}\label{eq:sepZ}
        \sep(f) \geq
        \frac{\sqrt{3}}{d^{\frac{d}{2}+1}\|f\|^{d-1}}.\end{equation}
Therefore, one needs an accuracy of $\delta = \widetilde{O}(\tau d )$ to compute distinct approximations of the roots of $f$ with complex root isolation.
\end{lemma}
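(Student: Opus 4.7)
The plan is to invoke Mahler's classical separation bound from \cite{mahler1964}: for any nonzero $f \in \C[z]$ of degree $d$ with distinct roots,
\[
\sep(f) \;>\; \sqrt{3}\,\frac{\sqrt{|\Disc(f)|}}{d^{(d+2)/2}\,\|f\|^{d-1}}.
\]
Specializing this to $f \in \Z[i][z]$ (replacing $f$ by its squarefree part if necessary, which does not change $\sep(f)$ and can only decrease both the degree and the norm, hence strengthens the conclusion), I would exploit the fact that $\Disc(f)$ is a nonzero element of $\Z[i]$, so that $|\Disc(f)|^2 \in \N \setminus \{0\}$ and in particular $|\Disc(f)| \geq 1$. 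Substituting this into Mahler's bound and noting that $(d+2)/2 = d/2+1$ yields exactly the inequality \eqref{eq:sepZ}.

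For the accuracy statement, the plan is to take $\log_2$ of the reciprocal of \eqref{eq:sepZ} and to bound $\log_2\|f\|$ in terms of $\tau$ and $d$. Every Gaussian integer of bitsize $\tau$ has modulus at most $\sqrt{2}\cdot 2^\tau$, so $\|f\|^2 \leq 2(d+1)\,2^{2\tau}$, and therefore $\log_2 \|f\| = O(\tau+\log d)$. Substituting back gives
\[
\log_2\bigl(1/\sep(f)\bigr) \;=\; O\bigl(d\log d + d\tau\bigr) \;=\; \widetilde{O}(\tau d).
\]
Choosing the root-isolation precision $\delta$ larger than, say, $\log_2(2/\sep(f))$ guarantees that the computed approximate roots are pairwise separated, and the displayed bound shows that $\delta = \widetilde{O}(\tau d)$ suffices.

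There is essentially no obstacle: the separation inequality is an immediate specialization of Mahler's result once the discriminant factor has been removed using the Gaussian integer hypothesis, and the accuracy estimate is a straightforward logarithmic computation. The only care needed is to reduce first to the squarefree part of $f$ so that $\Disc(f)\neq 0$.
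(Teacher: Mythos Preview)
Your proposal is correct and follows essentially the same route as the paper: invoke Mahler's separation bound, reduce to the squarefree part when $f$ has repeated roots (using that this preserves $\sep$, does not increase degree or norm, and keeps the coefficients in $\Z[i]$), and then exploit $|\Disc(f)|\geq 1$ for a nonzero Gaussian integer to drop the discriminant factor. The paper organizes this as two explicit cases and cites \cite{lang1993} for the fact that the squarefree part stays in $\Z[i][z]$, but the argument is otherwise identical; your treatment of the accuracy claim via $\log_2\|f\| = O(\tau+\log d)$ is in fact more explicit than the paper's.
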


\begin{proof} By \cite[Theorem
  2]{mahler1964}, % \cite[Proposition 10.21]{basu2006}
 one has:
    \begin{equation}\label{eq:sepC}
        \sep(f) \geq \frac{\sqrt{3|\Disc(f)|}}{d^{\frac{d}{2}+1}\|f\|^{d-1}},
    \end{equation}
    where $\Disc(f) = f_d^{2d-2} \prod_{j<k} (\alpha_j - \alpha_k)^2$ is the discriminant of $f$.
    Note that $\Disc(f)$ can be written as a polynomial in $f_0,\dots,f_d$ with integer coefficients, thus $\Disc(f)\in \Z[i]$ and one has $|\Disc(f)|\geq 1$ which from
    \eqref{eq:sepC}, implies
    $$\sep(f) \geq
    \frac{\sqrt{3|\Disc(f)|}}{d^{\frac{d}{2}+1}\|f\|^{d-1}} \geq
    \frac{\sqrt{3}}{d^{\frac{d}{2}+1}\|f\|^{d-1}}.$$.
\if{    
    Now we consider the two following cases.

    \textbf{Case 1:} $f$ has no multiple root in $\C$. Then, $\Disc(f)$ is
    nonzero. Since $\Disc(f)\in \Z[i]$, one has $|\Disc(f)|\geq 1$ which from
    \eqref{eq:sepC}, implies
    $$\sep(f) \geq
    \frac{\sqrt{3|\Disc(f)|}}{d^{\frac{d}{2}+1}\|f\|^{d-1}} \geq
    \frac{\sqrt{3}}{d^{\frac{d}{2}+1}\|f\|^{d-1}}.$$

    \textbf{Case 2:} $f$ has multiple roots in $\C$. Let $p$ be the square-free
    part % separable
    % part\footnote{The separable part of $f$ is a separable
    %     polynomial with the same set of roots as $f$ in $\C$. The separable
    %     part of
    %     $f$ is
    %     unique up to a multiplicative constant.}
      of $f$. From
\cite[Corollary 2.2
%\footnote{Let $A$ be a factorial ring and $K$ its quotient field. If  $f\in A[z]$ has a    factorization $f(z)=p(z)h(z)$ in $K[z]$, where $p,h$ have content $1$,    then $p,h\in A[z]$. Note that $\Z[i][z]$ is an Euclidean domain, hence, this  is a factorial ring.}
]{lang1993}, the coefficients of $p$ lie in $\Z[i]$. Clearly,
$\sep(p)=\sep(f)$ and $\|p\|\leq \|f\|$. Hence, by applying Case 1 for $p$,
where $k=\deg p\leq d$, we have
    $$\sep(f)=\sep(p) \geq
    \frac{\sqrt{3}}{k^{\frac{k}{2}+1}\|p\|^{k-1}}\geq
    \frac{\sqrt{3}}{d^{\frac{d}{2}+1}\|f\|^{d-1}}.$$
}\fi
\end{proof}

The following lemma provides a lower bound on the minimum of a real bivariate
polynomial over the unit circle in $\R^2$.

\begin{lemma}\label{lm:min}
  Let $p\in\Z[x,y]$ be a real bivariate polynomial of degree $d$ and $\tau$ be
  the maximum bitsize of its coefficients. Assume that $p$ is positive on the
  unit circle $\ucircle$. Then, the minimum of $p$ on $\ucircle$ satisfies the
  following inequality:
\begin{equation}\label{eq:min}
     p_{\min} :=   \min\{p(x,y):x^2+y^2=1\} \geq 2^{-\widetilde{O}(d^3(d+\tau))}.
\end{equation}
\end{lemma}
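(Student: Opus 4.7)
The plan is to reduce the minimization of $p$ over the compact real algebraic curve $\ucircle$ to bounding a strictly positive root of a single univariate integer polynomial, via Lagrange multipliers and effective elimination theory. Since $\ucircle \subset \R^2$ is compact, the minimum $p_{\min}$ is attained at some $(x^*, y^*) \in \ucircle$. The curve $\ucircle$ is smooth (its defining polynomial has non-vanishing gradient on it), so by Lagrange multipliers $(x^*, y^*)$ satisfies
$$ q_1(x, y) := y\,\partial_x p - x\,\partial_y p = 0, \qquad q_2(x, y) := x^2 + y^2 - 1 = 0, $$
where $q_1, q_2 \in \Z[x, y]$ have respective degrees at most $d$ and $2$, and bitsizes $\widetilde{O}(\tau)$ and $O(1)$. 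B\'ezout's bound guarantees at most $2d$ common complex solutions when the system is zero-dimensional; the degenerate case in which $q_1$ vanishes identically on an irreducible component of $\ucircle$ forces $p$ to be constant there and reduces the problem to evaluating $p$ at a rational point such as $(1, 0)$, where it takes a positive integer value $\geq 1$.

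Next I would introduce an auxiliary variable $u$ representing the value of $p$ at a critical point and eliminate $x, y$ from the extended system
$$ p(x, y) - u = 0, \quad q_1(x, y) = 0, \quad q_2(x, y) = 0 $$
by iterated resultants, producing a univariate polynomial $R(u) \in \Z[u]$ that vanishes at $u = p_{\min}$. Applying classical effective-elimination estimates (of Basu--Pollack--Roy or Jeronimo--Perrucci--Sabia type) one obtains $\deg R = O(d^2)$ and $\hht(R) = \widetilde{O}(d^3(d+\tau))$.

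Finally, since $p > 0$ on $\ucircle$, the value $p_{\min}$ is a strictly positive root of $R \in \Z[u]$. The classical Cauchy bound on nonzero roots of an integer polynomial, namely $|\alpha| \geq (1 + \|R\|_\infty)^{-1}$, then gives
$$ p_{\min} \geq 2^{-\hht(R) - O(\log \deg R)} = 2^{-\widetilde{O}(d^3(d+\tau))}, $$
as required. The main obstacle will be the careful bit-size bookkeeping through the two resultant computations, which, although routine in principle, must be performed attentively to extract the announced factor $d^3(d+\tau)$; the degenerate configurations where infinitely many critical points appear only require a separate straightforward treatment, for instance by a small symbolic perturbation of $p$ or by reduction to a one-variable subproblem.
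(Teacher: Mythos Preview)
Your approach is essentially the same as the paper's: set up the Lagrange system $y\,p_x - x\,p_y = 0$, $x^2+y^2-1=0$, adjoin the value variable, eliminate to obtain a univariate integer polynomial vanishing at $p_{\min}$, and apply the Cauchy root bound. The paper carries out the elimination step by invoking the bit-size bounds for zero-dimensional parametrizations from Safey~El~Din--Schost, which delivers the $\widetilde{O}(d^3(d+\tau))$ estimate as a black box, whereas you propose iterated resultants with explicit bookkeeping; both routes lead to the same conclusion, though the cited result spares one the tedious height tracking you rightly flag as the main obstacle. Your treatment of the degenerate case is actually cleaner than the paper's: you observe directly that if $q_1$ vanishes on the (irreducible) circle then the tangential derivative of $p$ is identically zero, so $p$ is constant on $\ucircle$ and $p_{\min}=p(1,0)\in\Z_{>0}$; the paper instead argues via Krull's principal ideal theorem that $x^2+y^2-1$ must divide $q_1$, factors $p$ accordingly, and recurses on a smaller system.
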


\begin{proof} We consider the following algebraic set:
\begin{equation*}\label{eq:m}
V := \{(x,y,m) \in \C^3 :      p(x,y) -m = \  y\frac{\partial p}{\partial x}-
        x\frac{\partial p}{\partial y}=0, x^2+y^2=1 \}.
\end{equation*}
% where $m$ is a new variable.
Note that the projection of $V$ on the $m$-axis defines the critical values of
the restriction of the evaluation map $z \mapsto p(z)$ to $\ucircle$ which
contains $p_{\min}$.

Assume first that $V$ is finite. By \cite[Corollary 2]{SaSch2018}, there is a
zero-dimensional parametrization of $V$ defined by real univariate polynomials
with bitsizes upper bounded by $\widetilde{O}(d^3(d+\tau))$. Since there exists
$(x_0,y_0)$ on $\ucircle$ such that $(x_0,y_0,p_{\min})$ belongs to $V$,
$p_{\min}$ is a (non-zero) root of a univariate polynomial of degree at most
$O(d^3\tau)$. Hence, the Cauchy bound \cite{Cauchy1830} yields:
$$|p_{\min}| \geq 2^{-\widetilde{O}(d^3(d+\tau))}.$$

Assume now that $V$ is not finite. By Krull's theorem
\cite{krull1929idealtheorie}, this implies that $\ucircle$ is contained in
the complex zero set defined by $y\frac{\partial p}{\partial x}- x\frac{\partial
  p}{\partial y}=0$, whence is a factor of this polynomial. This implies that
there exists a factorization $p = p_1p_2$ where $p_1$ is a power of $x^2+y^2-c$
(where $c$ is a constant) and the zero set of the polynomial $y\frac{\partial
  p_2}{\partial x}- x\frac{\partial p_2}{\partial y}=0$ has a zero-dimensional
intersection with $\ucircle$. This yields the following analysis.
The set $V$ is the union of a 1-dimensional component containing points $(m, x,
y)$ where $(x, y)$ ranges over $\ucircle$ and $m = c-1$,  and a
0-dimensional component containing points $(m, x, y)$ which are solutions to
\[
  p(x, y)=m, \, y\frac{\partial p_2}{\partial x}- x\frac{\partial p_2}{\partial
    y}=0, \, x^2+y^2=1.
\]
Applying the first paragraph of the proof to the above system ends the proof.
\end{proof}

The upcoming result will be used to estimate bit complexities of the two
algorithms $\csosone$ and $\csostwo$.

\begin{lemma}\label{lm:epsN}
  Let $f\in \Hi(\Z)[z]$ be positive on $\ucircle$, of degree $d$ and $\tau$
  be the maximum bitsize of its coefficients. Then, there exists a positive
  integer $N=\widetilde{O}(d^3(d+\tau))$ such that $f-\frac{1}{2^{N}}$ is
  positive on $\ucircle$.
\end{lemma}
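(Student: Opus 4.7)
The plan is to reduce the problem to Lemma~\ref{lm:min} by translating $f \in \Hi(\Z)[z]$ into an associated real bivariate polynomial in $\Z[x,y]$ and comparing $1/2^N$ to its minimum on the unit circle $\ucircle$ of $\R^2$.

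First, I would construct the bivariate polynomial. For $\zeta \in \ucircle$, write $\zeta = x + iy$ with $x^2+y^2=1$, so $\bar\zeta = 1/\zeta = x - iy$. Writing each $f_k = a_k + i b_k$ with $a_k, b_k \in \Z$, a direct computation gives
\begin{equation*}
f(\zeta) = f_0 + \sum_{k=1}^d \bigl( 2 a_k \, \Re((x+iy)^k) + 2 b_k \, \Im((x+iy)^k) \bigr) =: p(x,y).
\end{equation*}
Since $\Re((x+iy)^k)$ and $\Im((x+iy)^k)$ have integer coefficients in $x,y$ (via the binomial expansion), we have $p \in \Z[x,y]$, and $p$ is positive on the unit circle of $\R^2$ by hypothesis.

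Next I would bound the degree and bitsize of $p$. The total degree of $p$ is exactly $d$. Each coefficient of $p$ arises as $\pm 2 \binom{k}{j} a_k$ or $\pm 2 \binom{k}{j} b_k$ for some $k \leq d$ and $j \leq k$; since $\hht(f) \leq \tau$, we have $|a_k|, |b_k| \leq 2^\tau$, and $\binom{k}{j} \leq 2^d$. Consequently the bitsize of the coefficients of $p$ is bounded by $\tau + d + O(1) = O(d+\tau)$.

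Now Lemma~\ref{lm:min} applies to $p$, yielding
\begin{equation*}
p_{\min} := \min\{p(x,y): x^2+y^2 = 1\} \geq 2^{-\widetilde{O}(d^3(d + (d+\tau)))} = 2^{-\widetilde{O}(d^3(d+\tau))}.
\end{equation*}
Pick the smallest positive integer $N$ with $1/2^N < p_{\min}$; then $N = \widetilde{O}(d^3(d+\tau))$, and for every $\zeta = x+iy \in \ucircle$ we have $f(\zeta) - 1/2^N = p(x,y) - 1/2^N > 0$, proving the lemma.

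The only step that requires care is the bitsize accounting for $p$: one must check that the binomial coefficients introduced by expanding $(x+iy)^k$ contribute only an additive $O(d)$ to the bitsize, so that the bound of Lemma~\ref{lm:min} (which is polynomial in $d$ and linear in $d+\tau$) absorbs them within the target $\widetilde{O}(d^3(d+\tau))$. Everything else is a direct invocation of the already-proved bivariate estimate.
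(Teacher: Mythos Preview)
Your proof is correct and follows essentially the same route as the paper's: define a real bivariate polynomial $p\in\Z[x,y]$ that agrees with $f$ on the unit circle, bound its degree and coefficient bitsize, and then invoke Lemma~\ref{lm:min} to lower-bound $p_{\min}$ and choose $N$ accordingly. Your version is in fact more explicit about the construction of $p$ and the bitsize accounting; the paper simply writes $p(x,y):=f(x+iy)$, records degree $2d$ and bitsize $O(\hht(d)+\tau)$, and applies Lemma~\ref{lm:min}. The differing degree ($d$ versus $2d$) and bitsize bookkeeping are immaterial, since both feed into the same $\widetilde{O}(d^3(d+\tau))$ estimate.
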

\begin{proof}

With $z = x+iy$, let us define $p(x,y) := f(x+iy)$.
Since $f \in \Hi(\Z)[z]$, one has $p \in \Z[x,y]$ with degree $2d$ and bitsize $O(\hht(d)+ \tau)$.
Clearly, $\min\{f(z):|z|=1\}=
    \min\{p(x,y):x^2+y^2=1\}=p_{\min}$.
Let us choose a positive integer $N$ such that
    $\frac{1}{2^{N}}\leq p_{\min}$.  From Lemma~\ref{lm:min},
    we conclude that $N=\widetilde{O}(d^3(d+\tau))$.
\end{proof}

The following result, stated in \cite[Lemma 2.1 \& Theorem 3.2]{bai1989floating}, will be used to investigate the bit complexity of the two algorithms $\csostwo$ and $\csosthree$ based on SDP solving.

\begin{lemma}\label{lm:Bai}
Let $Q$ be a Hermitian matrix indexed on $\{-d,\dots,d\}$,  with positive eigenvalues and rational entries.
Let $L$ be the factor of $Q$ computed by Cholesky's decomposition with
finite precision $\delta_c$. Then, $LL^T = Q+H$, where
\begin{equation}\label{eq:H}
|H_{ij}|\leq \frac{(d+2)2^{-\delta_c}\sqrt{|Q_{ii}Q_{jj}|}}{1-(d+2)2^{-\delta_c}}.
\end{equation}
In addition, if the smallest eigenvalue $\tilde{\lambda}$ of $Q$ satisfies the inequality
\begin{equation}\label{eq:2deltac}
    2^{-\delta_c} <  \frac{\tilde{\lambda}}{d^2+d+(d-1)\tilde{\lambda}},
\end{equation}
Cholesky's decomposition returns a rational nonsingular factor $L$.
\end{lemma}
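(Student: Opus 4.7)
The statement combines two classical results of Bai and Demmel on floating-point Cholesky factorization, so the plan is to reconstruct their backward-error analysis. I would first fix the standard floating-point model with unit roundoff $u = 2^{-\delta_c}$, namely $\mathrm{fl}(a\circ b) = (a\circ b)(1+\epsilon)$ with $|\epsilon|\le u$ for $\circ\in\{+,-,\times,/\}$, and a similar bound for square roots. Writing the classical column-by-column Cholesky recurrences
\[
L_{ii} = \sqrt{Q_{ii} - \sum_{k<i} |L_{ik}|^2}, \qquad L_{ji} = \bigl(Q_{ji} - \sum_{k<i} L_{jk}\overline{L_{ik}}\bigr)/L_{ii},
\]
I would then substitute the computed $\widetilde L$ into $\widetilde L \widetilde L^{T} - Q$ and expand. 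For each entry $(i,j)$, the number of floating-point operations involved is at most $d+2$ (one square root or one division plus at most $d$ multiply-adds in the inner product), so the accumulated relative error factor is bounded by $(1+u)^{d+2}-1$, which in turn is bounded by $(d+2)u/(1-(d+2)u)$ by standard estimates for products of $(1+\epsilon_k)$ terms.

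The key step is then to express $H_{ij}$ as an inner product of computed factors, each controlled in modulus by $\sqrt{Q_{ii}}$ and $\sqrt{Q_{jj}}$ respectively (since the exact Cholesky factor satisfies $\sum_{k}|L_{ik}|^2 = Q_{ii}$ and analogously for $j$). Combining this with the accumulated relative error factor above yields the stated bound
\[
|H_{ij}| \le \frac{(d+2)\,2^{-\delta_c}\sqrt{|Q_{ii}Q_{jj}|}}{1-(d+2)\,2^{-\delta_c}}.
\]
The main obstacle is purely bookkeeping: making sure every $(1+\epsilon)$ factor is accounted for exactly once so that the constant $d+2$ (rather than, say, $d+3$ or some larger value) comes out correctly; this is what takes up most of Bai--Demmel's Lemma~2.1.

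For the second part I would show that the spectral perturbation $\|H\|_2$ stays strictly below $\tilde\lambda$, so that $Q+H$ remains positive definite and Cholesky does not break down at a square root of a non-positive pivot. Using the crude bound $\|H\|_2 \le (2d+1)\max_{ij}|H_{ij}|$ together with $|Q_{ii}|\le \|Q\|_2$, the previous entrywise estimate gives an upper bound on $\|H\|_2$ that is strictly less than $\tilde\lambda$ precisely when
\[
2^{-\delta_c} < \frac{\tilde\lambda}{d^2+d+(d-1)\tilde\lambda},
\]
after a short rearrangement (the $d-1$ coefficient of $\tilde\lambda$ reflects the absorption of $\|Q\|_2$ into $\tilde\lambda$ via the positive definiteness assumption). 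By Weyl's inequality, all eigenvalues of $Q+H = \widetilde L \widetilde L^T$ are then strictly positive, so $\widetilde L$ is nonsingular. Rationality of $\widetilde L$ follows from the fact that the only non-rational operation in Cholesky is the square root on each diagonal: running the algorithm in exact arithmetic rounded to $\delta_c$ bits keeps every intermediate value in $\Q$, which is essentially the content of Bai--Demmel's Theorem~3.2. Again, the delicate step is pinning down the explicit constants $d^2+d$ and $d-1$ rather than merely an $O(d^2)$ bound.
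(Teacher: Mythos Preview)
The paper does not actually prove this lemma: it is stated with the preface ``The following result, stated in \cite[Lemma 2.1 \& Theorem 3.2]{bai1989floating}\ldots'' and no proof is given. So there is no in-paper argument to compare your proposal against; the authors simply import the result from Bai--Demmel.

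Your reconstruction is the right one in spirit --- it is exactly the classical backward-error analysis of floating-point Cholesky that Bai--Demmel carry out --- and the first part (the entrywise bound on $H$) is essentially correct as sketched. Two places deserve tightening if you intend this to stand as a proof rather than a plan. First, the indexing: the lemma as stated has $Q$ indexed on $\{-d,\dots,d\}$, hence of size $2d+1$, yet the constant $d+2$ in \eqref{eq:H} is the $(n+1)$ that arises for an $n\times n$ matrix with $n=d+1$; in the paper's actual use the Gram matrices are $(d+1)\times(d+1)$, so the constant is consistent with that, not with the stated index set. You should flag or silently correct this. Second, your derivation of the breakdown criterion \eqref{eq:2deltac} is too loose: the passage from ``$\|H\|_2 \le (2d+1)\max|H_{ij}|$ together with $|Q_{ii}|\le\|Q\|_2$'' to the precise constants $d^2+d$ and $d-1$ does not go through by the route you indicate, and the parenthetical about ``absorption of $\|Q\|_2$ into $\tilde\lambda$'' is not a valid step. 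The actual argument in Bai--Demmel (their Theorem~3.2) proceeds by induction on the leading principal submatrices, bounding each pivot from below; reproducing the exact constants requires following that induction rather than a single spectral-norm estimate at the end.
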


\section{Algorithm based on root isolation}\label{sec:csos1}
In this section, we propose an algorithm, called $\csosone$, to compute an SOHS
decomposition of a polynomial in $\Hi(\Z)[z]$ which is positive on the unit circle
$\ucircle$. It puts into practice a perturbation-compensation procedure based on
complex roots isolation, and can be viewed as the extension of the procedure  $\texttt{univsos2}$ (stated in \cite{chevillard2011} and analyzed in \cite[\S~4]{univsos}) to the complex setting.

\subsection{Description}

\textbf{Description.} Algorithm $\csosone$ takes as input a polynomial $f\in
\Hi(\Z)[z]$ of degree $d$ which is positive on $\ucircle$. It outputs two
positive rational numbers $\varepsilon, a$, a rational number $u_0$, and two
lists of Gaussian numbers $[u_1,\dots,u_d]$ and $[\alpha_1,\dots,\alpha_d]$ such
that
% \begin{multline}\label{fz}
%   f(z)= \, \left(\varepsilon-2\sum_{k=1}^{d}|u_k|-u_0\right)
%   +\sum_{k=1}^{d}|u_k|\left( z^k +\frac{u_k}{|u_k|}\right)
%   \left( \bar{z}^k +\frac{\bar{u}_k}{|u_k|}\right) \\
%   +a\prod_{k=1}^d\left( z-\alpha_k \right) \left(
%     \bar{z}^k-\bar{\alpha}_k\right) \text{ with }
%   \left(\varepsilon-2\sum_{k=1}^{d}|u_k|-u_0\right) > 0.
% \end{multline}
\begin{multline}\label{fz}
  f(z)= \, \left(\varepsilon-2\sum_{k=1}^{d}|u_k|-u_0\right)
  +\sum_{k=1}^{d}|u_k|\left( z^k +\frac{u_k}{|u_k|}\right)
  \left( {z}^{-k} +\frac{\bar{u}_k}{|u_k|}\right) \\
  +a\prod_{k=1}^d\left( z-\alpha_k \right) \left(
    {z}^{-1}-\bar{\alpha}_k\right) \text{ with }
  \left(\varepsilon-2\sum_{k=1}^{d}|u_k|-u_0\right) > 0.
\end{multline}

\begin{algorithm}
    \caption{$\csosone$}\label{alg:csosone}
    \label{alg:sos1}
    \begin{algorithmic}[1]
      \Require $f \in \Hi(\Z)[z]$ positive on $\ucircle$ of degree $d$
      \Ensure $\varepsilon, a \in \Q_+$,
      $u_0 \in \Q$, two lists $[u_1,\dots,u_d]$ and $[\alpha_1,\dots,\alpha_d]$
      in $\Q[i]$ providing an SOHS decomposition for $f$ on $\ucircle$ as
      in \eqref{fz}:

        \State  $\delta \gets 1$, $\varepsilon := 1$ and $p\gets f(x+iy)$\label{line:subs} \Comment{$z = x+iy$}
        \While {$\hasrealrootoncircle(p-\varepsilon)$} \label{line:while0}
         $\varepsilon \gets \frac{\varepsilon}{2}$
        \EndWhile\label{line:endwhile0}
        \State boo := false
        \While {not boo}\label{line:while1}
        \State $[\alpha_1,\dots,\alpha_d] \gets
        \rootsfun{f-\varepsilon}{\delta}$\label{line:roots}
        \State $F:=\prod_{k=1}^d\left(z-
        \alpha_k\right)\left( z^{-1} -\bar{\alpha}_k\right) $\label{line:F}
        \State $a\gets\coeffs(f,0)/\coeffs(F,0)$, $u := f-\varepsilon-aF$\label{line:a}
        \State $[u_0,u_1,\dots,u_d] \gets \coeffs(u)$\label{line:lsu}
        %\State $s \gets u_0 + 2\sum_{k=1}^{d}|u_k|$
        \If {$\varepsilon> u_0 + 2\sum_{k=1}^{d}|u_k|$} boo := true\label{line:stop}
        \Else $\ \delta := 2 \delta$\label{Line:2delta}
        \EndIf
        \EndWhile\label{line:endwhile1}
        \State \Return $\varepsilon, a, u_0,
        [u_1,\dots,u_d],[\alpha_1,\dots,\alpha_d]$
    \end{algorithmic}
\end{algorithm}

In Line \ref{line:subs} we replace $z$ by $x+iy$ in $f$ where $x,y$ are (real)
variables to obtain a real bivariate polynomial $p$ of degree $2d$. Since, by
assumption, $f$ is positive over $\ucircle$, there exists $\varepsilon>0$ small
enough, such that $p - \varepsilon$ is positive on $\ucircle$. The first while loop
from Line \ref{line:while0} to \ref{line:endwhile0} computes such positive
rational number $\varepsilon$. To do so, it uses an auxiliary procedure
$\hasrealrootoncircle$, which returns true if $p-\varepsilon$ cancels on the
$\ucircle = \{(x,y)\in\R^2:x^2+y^2-1=0\}$. Such a procedure is easily obtained
with any polynomial system solver for bivariate polynomial systems. In practice
we use the real root solver {\sc msolve} \cite{msolve}.
%gives us the true value if the considered bivariate polynomial has no real roots on the unit circle, and the false value if otherwise.
% \if
% provides us a positive number $\varepsilon \in \Q_+$ small enough such that $f-\varepsilon$ is positive over the unit circle.
% Within the function, we replace $z=x+iy$ and $z^{-1}=x-iy$, where  $x,y\in
% \R$, in $f(z)-\varepsilon$ and obtain a real bivariate polynomial
%  $p_{\varepsilon}(x,y)$; then, we check the positivity of $p_{\varepsilon}$
% over the set $\{(x,y)\in\R^2:x^2+y^2-1=0\}$. If $p_{\varepsilon}$ is not positive over the unit circle, then we decrease the value of $\varepsilon$ ($\varepsilon:=\frac{\varepsilon}{2}$) and repeat the previous step.
% \fi

In the second while loop from Line \ref{line:while1} to \ref{line:endwhile1},
the algorithm computes at Line \ref{line:roots} Gaussian approximations
$\alpha_1,\dots,\alpha_d$ (and their conjugates) of the complex roots of
$f-\varepsilon$ with accuracy $\delta$. This is done using a procedure
\texttt{complexroots} which on input a rational fraction and a required accuracy
$\delta$ returns all the complex roots of the numerator of the fraction at
accuracy $\delta$ (see, e.g., \cite{BiRo14}).

The idea is to obtain (up to proper scaling) an approximate SOHS decomposition
$F$ of $f - \varepsilon$.

%$\rootsfun{f-\varepsilon}{\delta}$, where $\alpha_i\neq \bar{\alpha}_j$ for $ i\neq j$, .
The auxiliary $\coeffs$ procedure provides the list of coefficients of a polynomial, e.g., $\coeffs(f,0)$ returns the constant term of $f$.
We then consider the remainder $u$ at Line \ref{line:a} which is the difference between $f-\varepsilon$ and its approximate SOHS decomposition.
As proved in Section \ref{sec:proofcsos1}, if the precision of root isolation is large enough, the stopping condition $\varepsilon > u_0 +
2\sum_{k=1}^{d}|u_k|$ is fulfilled, otherwise the precision
is increased.
%perform Line \ref{Line:2delta}, i.e.  increase the root isolation accuracy $\delta$ in computing $[\alpha_1,\dots,\alpha_d]$ in Line \ref{line:roots}, and repeat Lines \ref{line:while1}--\ref{line:endwhile1}.

To illustrate $\csosone$, we use the following simple
example.

\begin{example} Let $f = 5 + (1+i) z^{-1}+(1-i)z$ which is positive on
  $\ucircle$. We obtain $p=5+2x+2y$. With $\varepsilon = 1$, we check with
  $\hasrealrootoncircle$ that $p-\varepsilon$ is positive on $\ucircle$.
%Let $g := 4+ (1+i) + (1-i)z^2$.
%
With precision $\delta = 16$, we compute complex approximation roots $\alpha_1 = -\frac{7}{4} - \frac{7}{4}i$ and $\bar \alpha_1$ of $f-\varepsilon$.
Defining $F=(z-\alpha_1)(z^{-1}-\bar\alpha_1)$, we obtain $a=\frac{32}{57}$, $u=f-\varepsilon-aF=(\frac{1}{57}+ \frac{i}{57}) z^{-1}+(\frac{1}{57} -\frac{i}{57})z$. Clearly, $\varepsilon = 1 > 0 + \frac{2\sqrt{2}}{57}$ so the condition in Line \ref{line:stop} is satisfied. Then, $f$ has an exact SOHS decomposition as follows:
\begin{center}
    $f = (1 -\frac{2\sqrt{2}}{57})+\frac{\sqrt{2}}{57}(z+
\frac{1+i}{\sqrt{2}})
(z^{-1} +\frac{1-i}{\sqrt{2}}) + \frac{32}{57}(z+\frac{7}{4}+ \frac{7}{4}i)(z^{-1}+\frac{7}{4} - \frac{7}{4}i)$.
\end{center}
\end{example}

\subsection{Proof of Theorem \ref{thm:bit1}}
\label{sec:proofcsos1}
%\begin{proof}[Proof of Theorem~\ref{thm:bit1}, correctness]
\begin{proof}[Correctness of Algorithm $\csosone$]
We first prove that \textit{Algorithm $\csosone$ terminates and outputs an SOSH  decomposition of $f$.}

By Lemma \ref{lm:epsN}, there exits a positive rational $\varepsilon$ such that
$f - \varepsilon$ is also positive on $\ucircle$. Thus, the first loop
(from Line \ref{line:while0} to Line \ref{line:endwhile0}) of Algorithm
$\csosone$ terminates. The magnitude of the coefficients of the remainder
polynomial $u$ defined in Line \ref{line:lsu} converges to 0 as the precision
$\delta$ of the complex root finder goes to infinity (because of the continuity
of roots w.r.t. coefficients). This implies that the condition of Line
\ref{line:stop} is fulfilled after finitely many iterations, thus the second
loop (from Line \ref{line:while1} to Line \ref{line:endwhile1}) always
terminates. \if{ In Line \ref{line:lsu}, $aF$ is an approximation of
  $f-\varepsilon$ that depends on the accuracy $\delta$ of the computation in
  Line \ref{line:roots}. The difference $(f-\varepsilon) - aF$ goes to the zero
  polynomial when $\delta$ goes to the infinity. Thus, the sum $u_0 +
  2\sum_{k=1}^{d}|u_k|$ will go to zero as well.

This implies that the while loop in Lines \ref{line:while1}--\ref{line:endwhile1} will end after finitely many iterations.
\fi
Eventually, we have
    \begin{equation*}%\label{eq:u}
f=\varepsilon+u_0+\left(u_1 {z}^{-1} +\bar{u}_1z\right)+\dots+\left(
 u_d {z}^{-d}+\bar{u}_dz^d\right)  +aF.
    \end{equation*}
%    On $\ucircle$, one has $z^{-1} = \bar z$ and $z^k \bar z^k = |z|^k =   1$, and thus
In addition,
    \begin{equation}\label{eq:uk}
        u_k z^{-k} + \bar{u}_k  z^k =|u_k|\left( z^k
        +\frac{u_k}{|u_k|}\right) \left( z^{-k}
        +\frac{\bar{u}_k}{|u_k|}\right)-2|u_k|,
    \end{equation}
yielding \eqref{fz}.
From the proof of the Riesz-Fej\'{e}r theorem (see, e.g., \cite[pp. 3--5]{dumi2017}), $f$ can be decomposed as a single Hermitian square, thus the constant term of $f$ lies in $\Q_+$.
Similarly the constant term $\prod_{k=1}^d |\alpha_k|^2$ of $F$ is also a positive rational number, so $a \in \Q_+$.
    Clearly, the polynomial $F$ and the first
     term on the right-hand side of \eqref{eq:uk} are SOHS. Hence, as $\varepsilon > u_0 +
    2\sum_{k=1}^{d}|u_k|$, the right-hand side of \eqref{fz} is a  sum of
    $d+2$ Hermitian squares involving Gaussian (or Gaussian modulus) numbers.
\end{proof}
%In general, the coefficients in the SOHS decomposition \eqref{fz} are not Gaussian because the moduli $|u_k|$ are not rational.

We now analyze the bit complexity of Algorithm $\csosone$.
\begin{proof}[Proof of Theorem~\ref{thm:bit1}] Let us show that \textit{the bitsizes of $u_0,\dots,u_d$, $\alpha_1,\dots,\alpha_d$ and $a$ in \eqref{fz} are bounded from above by $\widetilde{O}(d^5(d+\tau))$.}

The proof is almost the same as in the univariate real setting \cite[Theorem 4.3]{univsos}, thus we only provide the main ingredients and skip some technical details.
From Lemma \ref{lm:epsN}, there exists a positive
integer $N=\widetilde O (d^3(d+\tau))$ such that $f-\varepsilon$ is positive on $\ucircle$, with $\varepsilon =\frac{1}{2^{N}}$.
%and $N=Cd^3(d+\tau)$, for a positive constant $C>1$.
Define $m:=2d$ and
%and $g:=z^d(f-\varepsilon)$. For later use, note that over $\ucircle$, one has
$g(z) := z^d(f-\varepsilon) = \bar f_d z^{2 d} + \dots + \bar f_1 z^{d+1} +
g_d z^d + f_1 z^{d-1} + \dots + f_d$, with $g_{d}=f_0-\varepsilon$.
Note that  $g$ and $f-\varepsilon$ have the same roots. Denote by $\zeta_1,\dots,
\zeta_{m}$ the (exact) complex roots of $g$ and by $\zeta'_1,\dots,\zeta'_{m}$
their approximations with a precision $\delta$, so that
$\zeta'_j=\zeta_j(1+e_j)$, where $|e_j|\leq e := 2^{-\delta}$, for
$j=1,\dots,m$.

Let $g':=\bar f_d(z-\zeta'_1)\dots(z-\zeta'_{m})$. The remainder polynomial $u$
defined in Line \ref{line:a} of Algorithm $\csosone$ satisfies $z^d u = g - g'$.
%the approximation of $g$ reconstructing from $z'_1,\dots,z'_{m}$, i.e.,

We now show that if $\delta=N+\log_2((2d+1)^2\|f\|)=\widetilde O (d^3(d+\tau))$,
then the coefficients of $u$ satisfy the condition $\varepsilon > u_0 +
2\sum_{k=1}^{d}|u_k|$. We apply Lemma \ref{lm:sepZ} to the polynomial obtained
by multiplying $z^d(f-\varepsilon)$ with the least common multiple of its
coefficients. Hence, we require an accuracy at least $\widetilde{O}(N d )$
to compute distinct approximations of its roots in the worst case. We will
actually need an accuracy of larger bitsize $\widetilde O (d^4(d+\tau))$ in the
worst case. \if{
\begin{equation}\label{eq:e}
    e=2^{-\delta}<
\frac{1}{\delta}<\frac{1}{Cd^3(d+\tau)}<\frac{1}{m(m+1)}.
\end{equation}
}\fi
Let $j \in \{0,1,\dots,d\}$.
Using Vieta's formulas, we have
\begin{equation}\label{eq:Vieta}
    \sum_{1\leq i_1<\cdots<i_{j}\leq m}\zeta_{i_1}\cdots
\zeta_{i_{j}} = (-1)^{j}\frac{g_{m-j}}{g_{m}} = (-1)^{j}\frac{g_{m-j}}{\bar f_d} \,.
\end{equation}
Similarly, we have:
\begin{equation}\label{eq:Vietaprime}
    \sum_{1\leq i_1<\cdots<i_{j}\leq m}\zeta'_{i_1}\cdots
\zeta'_{i_{j}} = (-1)^{j}\frac{g'_{m-j}}{\bar f_d} \,.
\end{equation}
%=(-1)^{j}\frac{\bar f_{d-j}}{\bar f_d}.
We estimate an upper bound for the coefficient $\bar u_{d-j}$ of the difference
polynomial $u$. Clearly, $\bar u_{d-j}=g_{m-j}-g'_{m-j}$. From \eqref{eq:Vieta}
and \eqref{eq:Vietaprime}, we see that
\begin{align*}
|\bar u_{d-j}|  & =|\bar f_d| \left| \sum_{1\leq i_1<\cdots<i_{d+j}\leq m}(\zeta_{i_1}\cdots
\zeta_{i_{d+j}}-\zeta'_{i_1}\cdots \zeta'_{i_{d+j}}) \right|  \\
    &=|\bar f_d|  \left| \sum_{1\leq i_1<\cdots<i_{d+j}\leq m}\zeta_{i_1}\cdots
    \zeta_{i_{d+j}}\left( 1- \prod_{k=1}^{d+j} (1+e_{i_k})\right) \right |.
\end{align*}
Then, exactly as in the proof of \cite[Theorem 4.3]{univsos}, we rely on
\cite[Lemma 3.3]{higham2002accuracy} and the Cauchy bound \cite{Cauchy1830} to
obtain the desired estimates. \if{ As $e<\frac{1}{2m}$, we apply \cite[Lemma
  3.3]{higham2002accuracy} and get $(1+e_{i_1})\cdots(1+e_{i_{d+j}})\leq
  1+\theta_{d+j}$ with $|\theta_{i_{d+j}}|\leq \frac{i_{d+j}e}{1-i_{d+j}e}\leq
  \frac{me}{1-me}$. (\textit{Need to check again}) Since \eqref{eq:e}, we
  have $$(m+1)e-\frac{me}{1-me}=\frac{e(1-m(m+1)e)}{1-me}\geq 0.$$ This yields
  $\frac{me}{1-me}\leq (m+1)e$. So, we can conclude that
$$\left| 1-(1+e_{i_1})\cdots(1+e_{i_{d+j}})\right|\leq (m+1)e.$$
From the above presentation of $\bar u_j$ and the above results, we obtain the
following estimates:
\begin{equation}\label{eq:ginfty}
    |\bar u_{j}|\leq |\bar f_{d}|(j+1)e\leq \|f\|_{\infty}(m+1)e.
\end{equation}
The conclusion holds for
    $j=0,\dots, d$, one has
    $$u_0+2\sum_{k=1}^{d}|\bar u_k|\leq
    e(m+1)^2\|f\|_{\infty}\leq e(m+1)^2\|f\|_{\infty}.$$
It follows from $\delta=N+\log_2\left( (2d+1)^2\|g\|_{\infty}\right)$ that
    $e(2d+1)^2\|g\|_{\infty}=\varepsilon$. Therefore,
    $\varepsilon>u_0+2\sum_{k=1}^{d}|u_k|$ holds when
    $\delta=\widetilde{O}(d^3(d+\tau))$.

We choose $e_j=e=2^{-\delta}$ and $z'_j=z_j(1+2^{-\delta})$. This implies
    that
$|\bar u_{j}|= |\bar f_{d}||1-(1+2^{-\delta})^j|,$
    for all $j=0,\dots,d$.
It follows from $\hht(\bar f_d)\leq \tau$,
   $\hht(\delta)=\widetilde{O}(d^3(d+\tau))$,
     and
    $\hht(\varepsilon)=\widetilde{O}(d^3(d+\tau))$, that
    $$\hht(\bar u_{j})=\widetilde{O}(d^3(d+\tau)+jd^3(d+\tau))\leq \widetilde{O}(d^4(d+\tau)).$$
    Hence, the maximal bitsize of the coefficients of $u$ is bounded from above by $\widetilde{O}(d^4(d+\tau))$.

We now estimate the bit complexity of the coefficient $a$ in \eqref{fz}. From Line \ref{line:a} in $\csosone$, one has $a = \frac{f_0}{F_0}$, where $F_0$ is the constant term of $F$. Clearly, $\hht(f_0)\leq \tau$.  Because of \eqref{eq:Vieta}, $$F_0=(-1)^d\bar F_d\sum_{1\leq i_1<\cdots<i_{d}\leq m}z'_{i_1}\cdots z'_{i_{d}},$$
where $ z'_{i_j}\in\{\alpha_1,\dots,\alpha_{d},\frac{1}{\bar{\alpha}_1},\dots,\frac{1}{\bar{\alpha}_d}\}$. Since $\bar F_d =1$, $\hht(\alpha_i)\leq\hht(\delta)$, one has
 $\hht(F_0)\leq d\hht(\delta)+\log_2\binom{2d}{d}\leq d\hht(\delta)+d\log_2(d+1)=\widetilde{O}(d^4(d+\tau))$.

 Finally, the maximal bitsize of $u_k$'s and $a$ is bounded from above by $\widetilde{O}(d^4(d+\tau))$, as claimed.
}\fi
%\end{proof}

We now prove the remaining assertion in Theorem \ref{thm:bit1}: \textit{Algorithm $\csosone$ runs in $\widetilde{O}(d^6(d+\tau))$ boolean
    operations.}
Again the proof scheme is very similar as the one of  \cite[Theorem 4.4]{univsos}. The algorithm includes two steps.

We consider the first step checking that $g(z)$ defined in the previous proof has no real root on the unit circle.
Let $\varepsilon$ be given as in Lemma~\ref{lm:epsN} with  $\hht(\varepsilon)=\widetilde{O}(d^3(d+\tau))$.
By relying on Sylvester--Habicht sequences \cite[Corollary~5.2]{lick2001}, the check can be performed
using $O((2d)^2\hht(\varepsilon))=\widetilde{O}(d^5(d+\tau))$ boolean operations.
%
%By Lemma \ref{lm:sepZ}, the complex roots of $z^d(f-\varepsilon)$ can be approximated with isolating    disks of radius less than    $2^{-\widetilde{O}(d^4 (d+\tau) )}$.
In the second step, we compute approximate complex roots of $g(z)$ and check the condition at Line \ref{line:stop}.
It follows from \cite[Theorem~4]{mehlhorn2015approximate} that isolating disks of radius less than $2^{-\delta}$ for all complex roots of $g(z)$ can be computed in $\widetilde{O}(d^3 + d^2 \hht(\varepsilon) + d \delta)) = \widetilde{O}(d^6(d+\tau))$ boolean operations.
The computation of all $u_k$ has a negligible cost w.r.t. to the computation of the complex roots.
Therefore, we conclude that  $\csosone$ runs in $\widetilde{O}(d^6(d+\tau))$
boolean operations.
\end{proof}

\section{Algorithm based on complex SDP}\label{sec:csos2}

This section states and analyzes another perturbation-compensation algorithm,
named $\csostwo$, to compute an SOHS decomposition of a trigonometric polynomial
being positive on $\ucircle$. In the algorithm, the approximate SOHS
decomposition for the perturbation is computed by using complex SDP solving. It
can be viewed as the adaptation of the procedure $\texttt{intsos}$ (stated and
analyzed in \cite[\S~3]{magron2021exact}) to the complex univariate setting. Let
$I$ stands for the identity matrix of size $d+1$. A Hermitian matrix $Q$ is said
to be positive semidefinite (resp. definite) if $Q$ has only non-negative (resp.
positive) eigenvalues, and in this case we use the notation $Q \succeq 0$ (resp.
$Q \succ 0$). Given $f \in \Hi[z]$ of degree $d$, recall that a Hermitian matrix
$Q \in \C^{(d+1) \times (d+1)}$ is called a \emph{Gram} matrix associated with
$f$ if $f= v_d^{\star} \cdot Q \cdot v_d$, where $v_d(z) := (1,z,\dots,z^d)$
contains the canonical basis for polynomials of degree $d$ in $z$. By
\cite[Theorem 2.5]{dumi2017}, $f$ is positive on $\ucircle$ if and only if there
exists a positive definite Gram matrix associated to $f$.

\subsection{Description}
%As same as the input of Algorithm \ref{alg:csosone},
The input of Algorithm $\csostwo$ includes a polynomial $f\in
 \Hi(\Z)[z]$ of
degree $d$ which is positive on $\ucircle$.
%, precision parameters $\delta \in \N$, $R\in \N$ for the complex SDP solver, and a precision $\delta_c \in \N$ for the Cholesky's decomposition.
The outputs are $\varepsilon \in \Q_+$, a list of Gaussian numbers
$[u_0,u_1,\dots,u_d]$, and a list of polynomials $[s_1,\dots,s_d]$ in $\Q[i][z]$ providing an
SOHS decomposition of $f$ as follows
    {\small
        \begin{equation}\label{eq:fsos2}
            f = \Big(\varepsilon-u_0-2\sum_{k=1}^{d}|u_k|\Big)
            +\sum_{k=1}^{d}|u_k|\Big(z^k
            +\frac{u_k}{|u_k|}\Big)
            \Big(z^{-k} +\frac{\bar{u}_k}{|u_k|}\Big)
            +\sum_{k=0}^{d}
            s_k^\star s_k.
        \end{equation}
    }

\begin{algorithm}
\caption{$\csostwo$}\label{alg:csos2}
\begin{algorithmic}[1]
\Require  $f \in \Hi(\Z)[z]$ positive on $\ucircle$ of degree $d$
%, precision parameters $\delta \in \N$ and $R\in \N$ for the SDP solver, precision $\delta_c \in \N$ for the Cholesky's decomposition,
\Ensure $\varepsilon\in \Q_+$,  $[u_0,u_1,\dots,u_d]$
    in $\Q[i]$, $[s_0,\dots,s_d]$ in $\Q[i][z]$ providing an
    SOHS decomposition of $f$ as in \eqref{eq:fsos2}.

\State  $\delta \gets 1$, $R \gets 1$, $\delta_c = 1$, $\varepsilon := 1$ and $p\gets f(x+iy)$\label{line:sub2}
\While {$\hasrealrootoncircle(p-\varepsilon)$} $\varepsilon \gets \frac{\varepsilon}{2}$
\EndWhile
\State boo := false
\While {not boo} \label{line:while4}
\State $(\tilde{Q}, \tilde{\lambda}) \gets
    \sdpfun{f-\varepsilon}{ \delta}{R}$ \label{line:sdp}
\State $[s_0,\dots,s_d]
    \gets \choleskyfun{\tilde{Q}}{\tilde{\lambda}}{\delta_c}$
    \label{line:chol} \Comment{$f_\varepsilon \simeq
        \sum_{k=0}^{d}  s_k^\star s_k$}
\State $u: = f-\varepsilon - \sum_{k=0}^{d}
    s_k^\star s_k$, \ $[u_0,u_1,\dots,u_d] \gets \coeffs(u)$\label{line:lsu2}
    %\State $s \gets u_0 + 2\sum_{k=1}^{d}|u_k|$
\If {$\varepsilon> u_0 + 2\sum_{k=1}^{d}|u_k|$}\label{line:condition}
    boo := true
\Else \ $\delta:=2\delta, R:=2R, \delta_c:=2\delta_c$
\EndIf
\EndWhile\label{line:endwhile4}
\State \Return $\varepsilon, [u_0,u_1,\dots,u_d], [s_0,\dots,s_d]$
\end{algorithmic}
\end{algorithm}
The first steps of $\csostwo$ (Lines
\ref{line:sub2}--\ref{line:endwhile0}) are exactly the same  as $\csosone$ to obtain $\varepsilon \in \Q_+$ such that $f - \varepsilon$ is positive on $\ucircle$.
Then, instead of using root
isolation as in $\csosone$, $\csostwo$ relies on complex SDP  (Line \ref{line:sdp}) and Cholesky's decomposition (Line \ref{line:chol}) to compute an approximate SOHS decomposition of the perturbed polynomial.
%  provide us $\varepsilon \in \Q_+$ small enough such that $f-\varepsilon$ is positive over the unit circle that are the same to the first four steps in $\csosone$.
%We enter in the while loop from Line \ref{line:while4}.
With $f-\varepsilon$, $\delta$, and $R$, the
$\sdp$ function calls an SDP solver to compute a rational approximation
$\tilde{Q}$ of the Gram
matrix associated to $f-\varepsilon$ and a rational approximation
$\tilde{\lambda}$ of its smallest eigenvalue.
As in \cite{magron2021exact} we analyze the complexity of the procedure by assuming that \texttt{sdp}
relies on the ellipsoid algorithm  \cite{GroetschelLovaszSchrijver93}, running in polynomial-time within a given accuracy $\delta$ and a radius bound $R$ on the Frobenius norm of $\tilde{Q}$.
Its outputs are obtained by solving the following complex SDP:
\begin{align}\label{eq:posGram}
    \lambda_{\min} = \displaystyle\max_{Q,\lambda} & \  \lambda  \nonumber\\
    \mbox{s.t.}& \ \trace (\Theta_k Q) = f_k - 1_{k=0}  \varepsilon \,, k = -d,\dots,d \,, \\
    & \ Q \succeq \lambda I \,, \lambda \geq 0 \,, \ Q \in \C^{(d+1) \times (d+1)} \nonumber\,,
\end{align}
where $\Theta_k$ is the elementary Toeplitz matrix with ones on the $k-$th
diagonal
and zeros elsewhere, $1_{k=0} = 1$ if $k=0$ and $0$ otherwise, $\trace(\cdot)$ stands for the usual matrix trace operator.
The equality constraints of SDP \eqref{eq:posGram} corresponds to the relation $f(z) - \varepsilon= v_d^{\star}(z) \cdot Q \cdot v_d (z) $.
This SDP program (corresponding to SDP (2.14) in \cite{dumi2017}) computes the Gram matrix associated to $f$ with the largest minimal eigenvalue.
%The first output is the optimal SDP solution and the second output is the optimal value $\lambda_{\min}$.
The $\cholesky$ function computes first an approximate Cholesky's decomposition $L L^T$ of $\tilde{Q}$ with precision $\delta_c$, and provides as output a list of polynomials $[s_0,\dots,s_d] \in \Q[i][z]$, $s_k$ is the inner product of the $(k+1)$-th row of $L$ by $v_d$.
One would expect to have $f-\varepsilon = \sum_{k=0}^d s_k^\star s_k$ after using exact SDP and Cholesky's decomposition.
Since the SDP solver is not exact, we have to consider the remainder $u = f - \varepsilon - \sum_{k=0}^d s_k^\star s_k$ and proceed exactly as in $\csosone$ to obtain an exact SOHS decomposition.

\subsection{Proof of Theorem \ref{thm:bit2}}

The lemma below prepares the bit complexity analysis of $\csostwo$.
\begin{lemma}\label{lm:deltaR}
    Let $f\in \Hi(\Z)[z]$ be positive on $\ucircle$
    of degree $d$ and bitsize $\tau$. Assume that $Q$ is a positive definite
     Gram matrix associated to $f$.
Then, there exist $\varepsilon\in \Q_+$ of bitsize $\widetilde O(d^3(d+\tau))$ such  that
     $f-\varepsilon$ is positive on $\ucircle$,
$\delta$ of bitsize $\widetilde O(d^3(d+\tau))$ and $R$ of bitsize $O (\hht(d)+\tau)$ such that $Q-\varepsilon/(d+1) I$ is a Gram
matrix associated to $f-\varepsilon$ with $Q-\varepsilon/(d+1) I \succ
 2^{-\delta}I$ and $\sqrt{\trace((Q-\varepsilon I)^2)}\leq R$.
\end{lemma}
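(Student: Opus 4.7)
The plan is to establish the existence of $Q$ together with $\varepsilon, \delta, R$ satisfying all stated properties; the phrase ``assume that $Q$ is a positive definite Gram matrix'' should be read as asserting existence of such a tuple, since otherwise the $Q$-independent bitsize bound on $\delta$ would be untenable (an arbitrary positive definite Gram matrix can have arbitrarily small smallest eigenvalue). The key trick is to construct $Q$ by inflating a Riesz--Fej\'er Gram matrix of $f - 2\varepsilon$ by a scalar multiple of $I$, which builds the required spectral buffer into $Q$ by design.

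\textbf{Step 1 (choice of $\varepsilon$).} I would apply Lemma~\ref{lm:epsN} with $N$ taken one bit larger than needed to produce $N = \widetilde O(d^3(d+\tau))$ such that $f - 2 \cdot 2^{-N}$ is positive on $\ucircle$. Set $\varepsilon := 2^{-N}$, so that both $f - \varepsilon$ and $f - 2\varepsilon$ are positive on $\ucircle$ and $\hht(\varepsilon) = \widetilde O(d^3(d+\tau))$.

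\textbf{Step 2 (construction of $Q$).} By Theorem~2.5 of \cite{dumi2017} applied to $f - 2\varepsilon$, there exists a positive semidefinite Gram matrix $Q_1 \in \C^{(d+1) \times (d+1)}$ with $v_d^\star Q_1 v_d = f - 2\varepsilon$. A direct computation gives the key identity
\[
v_d^\star(z)\, I\, v_d(z) = \sum_{k=0}^{d} z^{-k} z^{k} = d+1,
\]
so $\frac{2\varepsilon}{d+1} I$ is a Gram matrix of the constant $2\varepsilon$. Define
\[
Q := Q_1 + \frac{2\varepsilon}{d+1} I.
\]
Then $Q$ is a Hermitian positive definite Gram matrix of $f$, and since $Q_1 \succeq 0$ we have $\lambda_{\min}(Q) \geq \frac{2\varepsilon}{d+1}$.

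\textbf{Step 3 (Gram matrix of $f-\varepsilon$ and the $\delta$-bound).} Using the same identity, $v_d^\star(Q - \frac{\varepsilon}{d+1} I)v_d = f - \varepsilon$, so $Q - \frac{\varepsilon}{d+1} I$ is a Gram matrix of $f - \varepsilon$. Its smallest eigenvalue is
\[
\lambda_{\min}(Q) - \frac{\varepsilon}{d+1} \;\geq\; \frac{\varepsilon}{d+1} = \frac{1}{(d+1) 2^{N}}.
\]
Setting $\delta := N + \lceil \log_2(d+1) \rceil + 1$ gives $Q - \frac{\varepsilon}{d+1} I \succ 2^{-\delta} I$ and $\hht(\delta) = \widetilde O(d^3(d+\tau))$.

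\textbf{Step 4 (Frobenius bound for $R$).} For positive semidefinite $Q$, diagonalization gives $\trace(Q^2) = \sum_i \lambda_i(Q)^2 \leq (\sum_i \lambda_i(Q))^2 = (\trace Q)^2$. The SDP trace relation yields $\trace Q = \trace(\Theta_0 Q) = f_0$, hence $\sqrt{\trace(Q^2)} \leq f_0$. The triangle inequality in Frobenius norm then gives
\[
\sqrt{\trace((Q - \varepsilon I)^2)} \;\leq\; \sqrt{\trace(Q^2)} + \varepsilon\sqrt{d+1} \;\leq\; f_0 + \sqrt{d+1},
\]
since $\varepsilon \leq 1$. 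Choosing $R := f_0 + d + 1 \in \Q_+$ yields $\hht(R) = O(\hht(d) + \tau)$. The main obstacle is the compatibility between the three bitsize requirements on $\varepsilon, \delta, R$: this is resolved precisely because the inflation $Q = Q_1 + \frac{2\varepsilon}{d+1}I$ simultaneously pins the smallest eigenvalue of $Q - \frac{\varepsilon}{d+1}I$ to the order $\varepsilon/(d+1)$ (controlling $\delta$) while leaving $\trace Q = f_0$ unchanged (controlling $R$).
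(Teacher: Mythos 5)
Your proposal is correct and, importantly, it is more rigorous than the paper's own argument; it repairs two genuine gaps in the published proof.

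The paper proves the $\delta$-bound by noting that $f - \varepsilon - 2^{-\delta}(d+1) > 0$ on $\ucircle$ and concluding $v_d^\star(z)(Q - \frac{\varepsilon}{d+1}I - 2^{-\delta}I)v_d(z) > 0$, and then apparently infers $Q - \frac{\varepsilon}{d+1}I \succ 2^{-\delta}I$. This inference is invalid: a Hermitian matrix $M$ can satisfy $v_d^\star(z) M v_d(z) > 0$ for all $z \in \ucircle$ without being positive definite, since Gram matrices of a fixed trigonometric polynomial form an affine family of real dimension $d^2$ (one may add any zero-diagonal-sum Hermitian perturbation). Equivalently, as you observe, an arbitrary positive definite Gram matrix $Q$ of $f$ can have $\lambda_{\min}(Q)$ arbitrarily small, making the universal reading of the lemma false. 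Your existential reading, together with the explicit construction $Q := Q_1 + \frac{2\varepsilon}{d+1}I$ where $Q_1 \succeq 0$ is a Gram matrix of $f - 2\varepsilon$ furnished by Riesz--Fej\'er, is precisely the correct fix: it pins $\lambda_{\min}(Q - \frac{\varepsilon}{d+1}I) \geq \frac{\varepsilon}{d+1}$ by design, so the desired $\delta$ exists with the stated bitsize. This is also exactly the form of the lemma needed in the correctness proof of $\csostwo$ (a strictly feasible SDP point with controlled parameters), so your interpretation matches the intended use.

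Your $R$-bound is also cleaner. The paper asserts $\trace((Q - \varepsilon I)^2) \leq \trace(Q^2)$, which rearranges to $(d+1)\varepsilon \leq 2\trace Q$; for small $f_0$ and moderate $d$ this need not hold. Your triangle inequality in Frobenius norm, $\sqrt{\trace((Q - \varepsilon I)^2)} \leq \sqrt{\trace(Q^2)} + \varepsilon\sqrt{d+1}$, sidesteps this and still gives $\hht(R) = O(\hht(d)+\tau)$. You also correct a small internal inconsistency: since $Q$ is declared a Gram matrix of $f$, one has $\trace Q = f_0$, whereas the paper writes $\trace Q = f_0 - \varepsilon$ (conflating $Q$ with a Gram matrix of $f - \varepsilon$); your identity $\sum_i \lambda_i(Q)^2 \leq (\sum_i \lambda_i(Q))^2 = f_0^2$ for $Q \succeq 0$ is the right way to land the trace estimate. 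In short: same high-level structure (perturb $f$, exploit $v_d^\star v_d = d+1$, bound trace), but you explicitly build the matrix the lemma needs instead of assuming it, and your Frobenius bound is airtight where the paper's is not.
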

\begin{proof}
By Lemma \ref{lm:epsN}, there is a positive integer
 $N$ and $\varepsilon = 2^{-N} =\widetilde{O}(d^3(d+\tau))$ such that $ f- 3\varepsilon/ 2 > 0$ on $\ucircle$.
Let $\delta:= \lceil N+1 + \log_2(d+1) \rceil =\widetilde{O}(d^3(d+\tau))$ so that $2^{-\delta} \leq \frac{\varepsilon}{2(d+1)}$.
One has $v_d^\star(z) v_d(z) = d+1$, thus $f(z)- \varepsilon = v_d^\star(z) (Q-\varepsilon/(d+1) I) v_d(z)$.
Since $f(z)- \varepsilon - 2^{-\delta} (d+1) > 0$, we obtain $v_d^\star(z) (Q-\varepsilon/(d+1) I - 2^{-\delta}I) v_d(z) > 0$.

Let $R:=\sqrt{(d+1)}f_0$.
Note that the equality constraint of SDP \eqref{eq:posGram} with $k=0$ reads $\trace (Q) = f_0 - \varepsilon \leq f_0$.
The maximal eigenvalue of $Q$ is less than $f_0$ and $\trace((Q-\varepsilon I)^2) \leq \trace(Q^2) \leq (d+1) f_0^2= R^2$.
\end{proof}
%
%\begin{proof}[Correctness proof of Algorithm $\csostwo$]
\if{
From this pair, we compute an approximate Cholesky's decomposition $L L^T$ of
$\tilde{G}$ with precision $\delta_c$. Again, when the precision
parameters $\delta, \delta_c, R$ go to the infinity together,
$u=(f-\varepsilon) - \sum_{j=0}^{d}s_j^\star s_j$ tends to the zero polynomial.
}\fi

%We prove the estimate for the maximal bitsize of the coefficients in the SOHS decomposition of $f$ obtained by Algorithm $\csostwo$.

The correctness and bit complexity proofs are very similar to the ones for the
$\texttt{intsos}$ algorithm in \cite[Proposition 10]{magron2021exact}, so we
only provide a sketch with the main ingredients.
\begin{proof}[Proof of correctness for $\csostwo$]
%Also, the only difference with $\csosone$ is the step performing complex SDP solving at Line \ref{line:chol}.
%, the approximation of $f-\varepsilon$ is computed by $\sum_{j=0}^{d}s_j^\star s_j$.
Since $f- \varepsilon$ is positive on $\ucircle$, SDP
\eqref{eq:posGram} has always a strictly feasible solution  for precision parameters ($\delta, R$) with bitsizes as in Lemma \ref{lm:deltaR} and
%The  bitsizes of the coefficients are bounded from above by $\widetilde{O}(d^3(d+\tau))$.
%According to Lemma \ref{lm:epsN}, there exists a positive integer    $N=\widetilde{O}(d^3(d+\tau))$ such that $f-\varepsilon$, where $\varepsilon=\frac{1}{2^{N}}$, is positive    over the unit circle.
%To solve \eqref{eq:posGram} with precision parameters $\delta, R$ by calling
the $\sdp$ function returns an approximate Gram matrix $\tilde{Q}$ of
  $f-\varepsilon$ such that $\tilde{Q}\succeq 2^{-\delta}I$ and $\trace(Q^2)
\leq R^2$.
%Relying on Lemma~\ref{lm:deltaR}, for large enough values of $\delta$ and $R$ of bitzise upper bounded by $\widetilde{O}(d^3(d+\tau))$,  this procedure succeeds.
 In parti\-cular, we obtain a rational approximation
$\tilde{\lambda}\geq 2^{-\delta}$ of the smallest eigenvalue of $\tilde{Q}$.

At Line \ref{line:sdp}, we compute an approximate Cholesky decomposition of $\tilde{Q}$
 by using the $\cholesky$ procedure; we obtain a rational nonsingular factor if there exists $\delta_c$ satisfying \eqref{eq:2deltac}.
Let $\delta_c$ be the smallest integer such that $2^{-\delta_c} <
 \frac{2^{-\delta}}{d^2+d+(d-1)2^{-\delta}}$.
 Since the
  function $t \mapsto \frac{t}{d^2+d+(d-1)t}$ increases on $[0,+\infty)$ and
   $\tilde{\lambda} \geq 2^{-\delta}$, \eqref{eq:2deltac} holds.
   %Hence, we obtain an approximate Cholesky decomposition of  $f-\varepsilon$.

We now consider the polynomial $u=f-\varepsilon -
\sum_{k=0}^{d}s_k^\star s_k$. The while loop (Lines \ref{line:while4}--\ref{line:endwhile4}) terminates when $\varepsilon>
 u_0 + 2\sum_{k=1}^{d}|u_k|$. This condition holds if $|u_k|\leq
  \frac{\varepsilon}{2d+1}$, for all $k=0,\dots,d$.
As in the proof of \cite[Proposition 10]{magron2021exact}, we prove that this holds for large enough $\delta$ and $\delta_c$, with bitsizes $\widetilde{O}(d^3(d+\tau))$.
\if{
We point out that the last
condition holds when $\delta$ and $\delta_c$ are both large
 enough. Indeed, we have
\begin{equation*}\label{eq:ukrk}
     u_k = f_k - \varepsilon_k- \left( \sum_{j=0}^{d}s_j^\star s_j\right)_k, \
      k=-d,\dots,d,
 \end{equation*}
where
$\varepsilon_0 = \varepsilon$, $\varepsilon_k = 0$ for $k\neq 0$, and $(\sum_{j=0}^{d}s_j^\star s_j)_k$ is the coefficient of monomial $z^k$ in the involved polynomial.
\if
$u_k= \begin{cases}
f_0 - \varepsilon - (\sum_{j=0}^{d}s_j^\star s_j)_0 &\mbox{if } k = 0, \\
f_k - (\sum_{j=0}^{d}s_j^\star s_j)_k & \mbox{if } k \neq 0.
 \end{cases}$
\fi
The positive definite matrix $\tilde{G}$ computed by the SDP solver is an
 approximation of the Gram matrix associated to $f-\varepsilon$. With the
  precision $\delta$, from \eqref{eq:uk}, we see that $\tilde{G}\succeq
   2^{-\delta}I$ and
\begin{equation*}\label{eq:rkTheta}
    \rvert f_k - \varepsilon_k- \trace(\Theta_k\tilde{G})\lvert=\rvert f_k - \varepsilon_k- \sum_{i+j=k}\tilde{G}_{ij}\lvert\leq
 2^{-\delta}.
\end{equation*}

Furthermore, from \eqref{eq:H}, the approximate Cholesky decomposition
 $LL^T$ of $\tilde{G}$
 performed at precision $\delta$ satisfies $LL^T = \tilde{G}+H$ and
$$|H_{ij}|\leq\frac{(d+2)2^{-\delta_c}
    \sqrt{|\tilde{G}_{ii}\tilde{G}_{jj}|}}{1-(d+2)2^{\delta_c}},$$
for all $i,j$ in $\{-d,\dots,d\}$. Applying the Cauchy-Schwarz inequality
for the trace function, we see that
$$\sum_{k=-d}^d|\tilde{G}_{kk}| = \trace(\tilde{G})\leq
 \sqrt{\trace(\tilde{G}^2)}\sqrt{\trace(I)}\leq R\sqrt{d+1}.$$
So, for each $k$ in $\{-d,\dots,d\}$, we have
\begin{equation}\label{eq:Rd1}
\Big|\sum_{i+j=k}\sqrt{\tilde{G}_{ii}\tilde{G}_{jj}}\Big|\leq \sum_{i+j=k} \frac{\tilde{G}_{ii}+\tilde{G}_{jj}}{2}\leq \trace(\tilde{G})\leq R\sqrt{d+1}.
\end{equation}
Therefore, we have
$$\Big|\sum_{i+j=k}\tilde{G}_{ij} - \Big(\sum_{j=0}^{d}s_j^\star s_j\Big)_k\Big| =\Big| \sum_{i+j=k}\tilde{G}_{ij} - \sum_{i+j=k}\big(LL^T\big)_k\Big| = \Big| \sum_{i+j=k}H_{ij}\Big|;$$
it follows from \eqref{eq:H} and \eqref{eq:Rd1} that the last number is bounded by
$$\frac{(d+2)2^{-\delta_c}}{1-(d+2)2^{-\delta_c}}\sum_{i+j=k}\sqrt{|\tilde{G}_{ii}\tilde{G}_{jj}|}\leq \frac{R\sqrt{d+1}(d+2)2^{-\delta_c}}{1-(d+2)2^{-\delta_c}}.$$
Let us take the smallest $\delta$ such that $2^{-\delta}\leq \frac{\varepsilon}{2(2d+1)} = \frac{1}{(2d+1)2^{N+1}}$ as well as the smallest $\delta_c$ such that
$$\frac{R\sqrt{d+1}(d+2)2^{-\delta_c}}{1-(d+2)2^{-\delta_c}}\leq \frac{\varepsilon}{2(2d+1)},$$
i.e., $\delta=\lceil N+1+\log_2(2d+1)\rceil$ and $\delta_c=\lceil\log_2R+\log_2(d+2)+\log_2(2^{N+1}(2d+1)^{3/2}+1)\rceil$.
It implies from above inequalities that
\begin{align*}
|u_k| & \leq \Big|f_k - \varepsilon_k- \sum_{i+j=k}\tilde{G}_{ij}\Big| + \Big|\sum_{i+j=k}\tilde{G}_{ij}-\Big(\sum_{j=0}^{d}s_j^\star s_j\Big)_k\Big|\\
& \leq  \frac{\varepsilon}{2(2d+1)}  + \frac{\varepsilon}{2(2d+1)}=\frac{\varepsilon}{2d+1}.
\end{align*}
This guarantees that the second while loop terminates.

%The last inequality provides us a bound of the coefficient $u_k$.
From Lemma~\ref{lm:deltaR} with $N= \widetilde{O}(d^3(d+\tau))$ and $R =\widetilde{O}(d^4(d+\tau))$, and $\delta,\delta_c$ defined as above, one has $\delta = \widetilde{O}(d^3(d+\tau))=\delta_c$. This number is also a bound of the bitsizes of the coefficients involved in the approximation $\sum_{j=0}^{d}s_j^\star s_j$ as well as the coefficients
of $u$. Thus, look at the constant term in \eqref{eq:fsos2}, namely $\varepsilon-u_0-2\sum_{k=1}^{d}|u_k|$, its maximal bitsize is bounded by $\widetilde{O}(d^3(d+\tau))$.
The proof is complete.
}\fi
\end{proof}
%
%We prove the assertion on the running time of $\csostwo$ in Theorem~\ref{thm:bit2}.
\vspace*{-0.3cm}
\begin{proof}[Bit complexity estimate for $\csostwo$] {We prove now that
    Algorithm $\csostwo$ runs in $\widetilde{O}(d^{13}(d+\tau)^2)$ boolean
    operations.}

Assume that $\varepsilon,\delta, R$ and $\delta_c$ are given as above so that, before terminating, Algorithm $\csostwo$ performs a single iteration in each while loop.
From above results, the bitsize of each $\varepsilon,\delta,\delta_c$ is upper bounded by $\widetilde{O}(d^3(d+\tau))$ and that of $R$ is $O(\hht(d)+\tau)$.

% Since $\Theta_0$ is the  identity matrix, the trace of $\tilde G$ is upper bounded by $f_0$,   thus of bitsize at most $\tau$.
\if
Mimic/adapt the proof of Proposition 10 and Theorem 11 from
 \cite{magron2021exact} to do the bit complexity analysis. As for the intsos
algorithm from \cite{magron2021exact},
\fi

To investigate the computational cost of the call to $\sdp$ at
Line~\lineref{line:sdp},  let us note $\nsdp = d+1$ the size of $\tilde{Q}$ and $\msdp = 2d+1$ the number of affine constraints of SDP  \eqref{eq:posGram}.
We rely on the bit complexity analysis
  of the ellipsoid method %by Khachiyan and Porkolab
   \cite{porkolab1997complexity}.
Solving SDP~\eqref{eq:posGram} is performed in
${O}( \nsdp^4 \log_2 (2^\tau \nsdp \, R \, 2^\delta) )$ iterations of the
 ellipsoid method, where each iteration requires
${O}(\nsdp^2(\msdp+\nsdp) )$ arithmetic operations over
$\log_2 (2^\tau \nsdp \, R \, 2^\delta)$-bit numbers
(see, e.g.,  \cite{GroetschelLovaszSchrijver93,porkolab1997complexity}).
We obtain the following estimates:
$O(\nsdp^4 \log_2(2^\tau \nsdp \, R \, 2^\delta)) = \widetilde{O}(d^7(d+\tau))$,
 $O(\nsdp^2(\msdp+\nsdp)) = O(d^{3})$, and $O(\log_2 (2^\tau \nsdp \, R \,
  2^\delta)) = \widetilde{O}(d^3(d+\tau))$.
Therefore, the ellipsoid algorithm runs in boolean time
$\widetilde{O}(d^{13}(d+\tau)^2)$ to compute the approximate Gram matrix
    $\tilde{Q}$.

Next, we compute the cost of calling $\cholesky$ in Line \ref{line:chol}.
Note that Cholesky's decomposition is performed in $O(\nsdp^3)$ arithmetic
 operations over $\delta_c$-bit numbers. Because of $\delta_c=\widetilde{O}(d^3(d+\tau))$ and $\nsdp=d+1$, $\cholesky$  runs in boolean time $\widetilde{O}(d^{6}(d+\tau))$.

 The other elementary arithmetic operations of Algorithm $\csostwo$ have a negligible cost w.r.t. to the $\sdp$ procedure. Hence, the algorithm runs in boolean time $\widetilde{O}(d^{13}(d+\tau)^2)$.
\end{proof}

\section{Rounding-projection algorithm}\label{sec:csos3}
Here, we introduce Algorithm $\csosthree$ which is an adaptation of
the rounding-projection algorithm by Peyrl and Parrilo, stated in
\cite{peyrl2008} and analyzed in \cite[\S~3.4]{magron2021exact}, and investigate
its bit complexity.

%\subsection{Description and correctness}
The input of $\csosthree$ is a polynomial $f \in \Hi(\Z)[z]$ of degree
$d$ which is positive over $\ucircle$.
%, precision parameters $\delta, R \in \N$ for the SDP solver, rounding precision $\hat \delta \in \N$.
The outputs consist of a list $[c_0,\dots,c_d] \subset \Q_+$ and a list of
polynomials $[s_0,\dots,s_d]$ in $\Q[i][z]$ that provide an SOHS decomposition of
$f$, namely $f = \sum_{k=0}^d c_k s_k^\star s_k$.

\begin{algorithm}
    \caption{{$\csosthree$}}
    \label{alg:PP}
    \begin{algorithmic}[1]
\Require $f \in \Hi(\Z)[z]$ positive on $\ucircle$ of degree $d$
         %precision parameters          $\delta, R \in \N$ for the SDP solver, rounding precision $\hat \delta \in \N$,
        \Ensure lists $[c_0,\dots, c_d] \subset \Q_+$ and  $[s_0,\dots,s_d ] \subset \Q[i][z]$ providing an
        SOHS decomposition of $f$ as follows: $$f = \sum_{k=0}^d c_k s_k^\star s_k$$

        \State  $\delta \gets 1$, $R \gets 1$, $\delta_c = 1$, $\hat{\delta} := 1$
        \State boo := false
        \While {not boo}\label{line:3Wh}
        \State $(\tilde Q, \tilde \lambda) \gets \sdpfun{f}{\delta}{R}$
        \If {$\tilde \lambda > 0$} boo := true
        \Else $\ \delta := 2 \delta$, $R := 2 R$
        \EndIf
        \EndWhile\label{line:3endWh}
        \State boo := false
        \While {not boo}\label{line:while5}
        \State $\hat Q \gets \roundfun{\tilde Q}{\hat \delta}$
         \label{line:roundPP}
        \For {$j \in \{0, \dots, d \}, k \in \{0,\dots,j \} $}\label{line:forQ}
        \State $Q_{j,j-k} := \hat Q_{j,j-k} - \frac{1}{d-k+1} (\sum_{i=k}^d
         \hat Q_{i,i-k} -  f_k )$ \label{line:projPP}
        \State $Q_{j-k,j} := Q_{j,j-k}^\star$
        \EndFor\label{line:endforQ}
        \State $[c_0,\dots, c_d; s_0,\dots,s_d] \gets \ldlfun{Q}$
         \label{line:cholPP} \Comment{$f =  \sum_{k=0}^d c_k s_k^\star  s_k$}
        \If {$c_0,\dots,c_d \in \Q_+, s_0,\dots,s_d \in \Q[i][z]$}
        boo := true \label{line:LDLok}
        \Else $\ \hat \delta \gets 2 \hat \delta$
        \EndIf
        \EndWhile \label{line:PPf}
        \State \Return $[c_0,\dots,c_d]$, $[s_0,\dots,s_d]$
    \end{algorithmic}
\end{algorithm}

As in $\csostwo$, the first while loop from Lines \ref{line:3Wh}--\ref{line:3endWh} provides an approximate Gram matrix $\tilde Q$ associated to $f$ and an approximation $\tilde \lambda$ of its smallest eigenvalue.
In Line~\ref{line:roundPP}, we round the matrix $\tilde Q$ up to precision $\hat\delta$ to obtain a  matrix $\hat Q$, with Gaussian coefficient entries.
The for loop from Line \ref{line:forQ} to Line \ref{line:endforQ} is the projection step to ensure that the equality constraints of SDP \eqref{eq:posGram} hold exactly.
Then we compute the $LDL^T$ decomposition of $Q$.
The list $[c_0,\dots,c_d]$ is the list of coefficients of the diagonal matrix $D$ and each $s_k$ is the inner product of the $(k+1)$-th row of $L$ and the vector $v_d$ of all monomials up to degree $d$.
If all $c_k$'s are positive rationals and all polynomials $s_k$' have Gaussian coefficients, then the second while loop ends, otherwise we increase the precision $\hat \delta$.

%\subsection{Bit complexity analysis}
\if
Mimic/adapt Section 3.4 in \cite{magron2021exact} to get the bit complexity estimate.
\fi

As emphasized in \cite[\S~3.4]{magron2021exact}, it turns out that $\csostwo$ and $\csosthree$ have the same bit complexity.
We omit any technicalities as the proof is almost the same as \cite[Theorem~12]{magron2021exact}.
\if{
\begin{proposition}\label{prop:deltaR} For $f \in \Hi(\Z)[z]$ positive on $\ucircle$, with degree $d$ and bitsize $\tau$, there exist $\delta$, $\hat\delta$ upper bounded by $\widetilde{O}(d^3(d+\tau))$, and $ R$ upper bounded by $O(\hht(d)+\tau)$ such that Algorithm
$\csosthree$ outputs an SOHS decomposition of $f$.
\end{proposition}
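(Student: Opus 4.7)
The plan is to mirror the analysis carried out for $\csostwo$ in the proof of Lemma~\ref{lm:deltaR}, and then to control separately the two additional perturbations introduced by $\csosthree$: the rounding step at Line~\ref{line:roundPP} and the projection step at Lines~\ref{line:forQ}--\ref{line:endforQ}. Since $f$ is positive on $\ucircle$, \cite[Theorem~2.5]{dumi2017} ensures the existence of a positive definite Gram matrix $Q^\star$ associated to $f$. As in Lemma~\ref{lm:deltaR}, its smallest eigenvalue can be lower bounded through $p_{\min}$, using the identity $v_d^\star(z)\, Q^\star\, v_d(z) = f(z)$ together with $v_d^\star(z)v_d(z) = d+1$ on $\ucircle$; invoking Lemma~\ref{lm:epsN} then gives $\lambda_{\min}(Q^\star) \geq 2^{-\widetilde{O}(d^3(d+\tau))}$. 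Moreover, the first equality constraint $\trace(Q) = f_0$ forces $\trace(Q^2) \leq (d+1)f_0^2$, so one may set $R := \sqrt{d+1}\, f_0$ of bitsize $O(\hht(d)+\tau)$.

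First I would handle the while loop at Lines~\ref{line:3Wh}--\ref{line:3endWh}. Taking $\delta = \widetilde{O}(d^3(d+\tau))$ large enough so that $2^{-\delta} < \lambda_{\min}(Q^\star)/2$, the SDP~\eqref{eq:posGram} admits a strictly feasible solution within the ellipsoid of radius $R$, hence the ellipsoid method returns $(\tilde Q, \tilde \lambda)$ with $\tilde \lambda > 0$ and $\tilde Q \succeq \tilde\lambda I \succeq 2^{-\delta} I$. In particular, the first while loop exits after a single iteration for this value of $\delta$.

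The core step is to calibrate $\hat\delta$ so that the rounded-and-projected matrix $Q$ remains positive definite, which is exactly the condition under which $\ldlfun{Q}$ at Line~\ref{line:cholPP} produces rational $c_0,\dots,c_d\in\Q_+$ and polynomials $s_0,\dots,s_d\in\Q[i][z]$. The rounding gives $\hat Q = \tilde Q + E_1$ with every entry of $E_1$ of modulus at most $2^{-\hat\delta}$. The projection at Line~\ref{line:projPP} amounts to adding a correction constant along each of the $2d+1$ Toeplitz diagonals; since $\tilde Q$ already satisfies the trace constraints exactly, the correction on the $k$-th diagonal equals $-\frac{1}{d-k+1}\sum_{i=k}^d (E_1)_{i,i-k}$, which is bounded in modulus by $2^{-\hat\delta}$. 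Writing $Q = \tilde Q + E$ with $E = E_1 + E_2$, every entry of $E$ has modulus at most $2\cdot 2^{-\hat\delta}$, whence $\|E\|_2 \leq 2(d+1)\cdot 2^{-\hat\delta}$ by the row-sum bound. Imposing $2(d+1)\cdot 2^{-\hat\delta} < 2^{-\delta}/2$ yields $\hat\delta = \widetilde{O}(d^3(d+\tau))$ and guarantees that $Q \succeq (2^{-\delta}/2)I \succ 0$.

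The main obstacle, as for $\csostwo$, is establishing that a single iteration of the outer while loop suffices under these choices of $\delta, R, \hat\delta$: one must verify simultaneously that the projection step preserves Hermitian symmetry and the equality constraints $\trace(\Theta_k Q)=f_k$ hold exactly (immediate from the formula in Line~\ref{line:projPP}), and that positive definiteness of $Q$ is not destroyed. Once these are in hand, Lemma~\ref{lm:Bai} (applied with $\delta_c$ on the same order as $\hat\delta$, which satisfies \eqref{eq:2deltac} since $\lambda_{\min}(Q) \geq 2^{-\delta-1}$) ensures that the symbolic $LDL^T$ procedure returns a rational nonsingular factor with positive diagonal, so the test at Line~\ref{line:LDLok} is fulfilled. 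This yields the claimed bitsize bounds $\delta, \hat\delta = \widetilde{O}(d^3(d+\tau))$ and $R = O(\hht(d)+\tau)$, and correctness of Algorithm~$\csosthree$.
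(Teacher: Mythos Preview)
Your overall strategy---bound the total perturbation produced by rounding and projection, then invoke Weyl's inequality to keep $Q\succ 0$---is the same as the paper's (which phrases this via \cite[Proposition~8]{peyrl2008} and the Frobenius distance). However, two points in your argument do not go through as written.

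First, the claim that ``$\tilde Q$ already satisfies the trace constraints exactly'' is false: the ellipsoid method only guarantees $|\trace(\Theta_k\tilde Q)-f_k|\le 2^{-\delta}$, so the projection correction on the $k$-th diagonal is bounded by $2^{-\hat\delta}+\frac{2^{-\delta}}{d-k+1}$, not $2^{-\hat\delta}$. Once you include this extra $2^{-\delta}$ term, your operator-norm bound becomes $\|E\|_2\le (d+1)(2\cdot 2^{-\hat\delta}+2^{-\delta})$, and the inequality $(d+1)2^{-\delta}<2^{-\delta}/2$ you would then need is impossible. The fix is to stop using $2^{-\delta}$ as the eigenvalue budget for $\tilde Q$: since you chose $\delta$ with $2^{-\delta}<\lambda_{\min}(Q^\star)/2$, the solver actually returns $\tilde\lambda\ge \lambda_{\min}(Q^\star)-2^{-\delta}\ge 2^{-N-1}$ with $N=\widetilde O(d^3(d+\tau))$ from Lemma~\ref{lm:epsN}. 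Taking $\delta,\hat\delta = N+O(\log d)$ (still $\widetilde O(d^3(d+\tau))$) then makes $(d+1)(2\cdot 2^{-\hat\delta}+2^{-\delta})<2^{-N-1}$ and restores $Q\succ 0$. This is exactly how the paper accounts for the SDP constraint violation.

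Second, invoking Lemma~\ref{lm:Bai} is misplaced: that lemma concerns \emph{floating-point} Cholesky, whereas $\ldlfun{Q}$ at Line~\ref{line:cholPP} is an exact symbolic factorization. Once $Q$ is Hermitian positive definite with Gaussian rational entries (which the projection formula guarantees), the exact $LDL^T$ automatically yields $c_k\in\Q_+$ and $s_k\in\Q[i][z]$; no precision parameter $\delta_c$ enters and Lemma~\ref{lm:Bai} is not needed.
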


\begin{proof} Suppose that the matrix $Q \succ 0$ obtained after the second while loop (Lines \ref{line:while5}--\ref{line:PPf}) which is associated to $f$ with the smallest eigenvalue $\lambda$. Let $N\in \N$ be the smallest integer satisfying $2^{-N}\leq \lambda$. From Lemma~\ref{lm:deltaR}, the maximal bitsize of $N$ is upper bounded by $\widetilde{O}(d^3(d+\tau))$. By Proposition 8 in \cite{peyrl2008}, Algorithm $\csosthree$ terminates and outputs such a matrix $Q$ together with an SOHS decomposition of $f$ if $2^{-\hat\delta}+2^{-\delta_E}\leq 2^{-N}$, where $\delta_E$ stands for the Euclidean distance between $Q$ and $\hat Q$, yielding   $$\sqrt{\sum_{i,j}(Q_{ij}-\hat Q_{ij})^2} = 2^{-\delta_E}.$$

For all index pair $i,j\in \{-d,\dots,d\}$, one has $|\hat Q_{ij}- \tilde Q_{ij}|\leq 2^{-\hat\delta}$.
As in the  proof of part $\rm a)$ in Theorem~\ref{thm:bit2}, at the SDP precision $\delta$, one has $\tilde Q \succeq 2^{-\delta}I$ and
$$\big|f_k- \sum_{i+j=k}\tilde Q_{ij}\big| \leq 2^{-\delta}.$$
We denote $e_{ij}:=\big( \sum_{i'+j'=i+j}\hat Q_{i'j'}\big) -f_{i+j}= \sum_{i'+j'=i+j}\big( \hat Q_{i'j'}-Q_{i'j'}\big)$, for $i,j\in \{-d,\dots,d\}$. The cardinal of the set $\{(i',j'):i'+j'=i+j\}$ is $d+1-(i+j)$. Hence, we have the following estimate:
\begin{align*}
    |e_{ij}| & \leq \sum_{i'+j'=i+j}\big|\hat Q_{i'j'}-\tilde Q_{i'j'}\big|+ \big|\sum_{i'+j'=i+j}\tilde Q_{i'j'}-f_{i+j}\big| \\
    & \leq  (d+1-i-j)2^{-\hat\delta} + 2^{-\delta} \leq  (d+1-i-j)(2^{-\hat\delta} + 2^{-\delta}).
\end{align*}
It follows that
$$2^{-\delta_E} = \sum_{i,j}\frac{e_{ij}}{d+1-i-j}\leq 2(d+1)^d(2^{-\hat\delta} + 2^{-\delta}).$$
To ensure that $2^{-\hat\delta}+2^{-\delta_E}\leq 2^{-N}$, it is sufficient to have
$$2^{-\hat\delta} +2(d+1)^d(2^{-\hat\delta} + 2^{-\delta})\leq 2^{-N}.$$
Since $\hht(N)=\widetilde{O}(d^3(d+\tau))$, the last inequality is obtained if $\delta$ and $\hat\delta$ are upper bounded by $\widetilde{O}(d^2(d+\tau))$.

Clearly, the $\sdp$ function is successful if $R$ is upper bounded by $O(d^4(d+\tau))$ as Lemma \ref{lm:deltaR}.
\end{proof}
}\fi

%We prove part $\rm b)$ of Theorem~\ref{thm:bit2}: \textit{the maximal bitsize of the involved coefficients in this SOS decomposition is bounded from above by $\widetilde{O}(d^6(d+\tau))$ and the boolean running time is $\widetilde{O}(d^{13}(d+\tau)^2)$.}
\vspace*{-0.3cm}

\begin{proof}[Bit complexity estimate for $\csosthree$]
%\textit{The maximal bitsize of the coefficients in the obtained SOHS decomposition by Algorithm $\csosthree$ is bounded from above by $\widetilde{O}(d^6(d+\tau))$ and the boolean running time is $\widetilde{O}(d^{13}(d+\tau)^2)$.}
  For $f \in \Hi(\Z)[z]$ positive over $\ucircle$ with degree $d$ and maximal
  bitsize $\tau$, there exist $\delta$, $\hat\delta$ with bitsizes upper bounded
  by $\widetilde{O}(d^3(d+\tau))$, and $ R$ with bitsize upper bounded by
  $O(\hht(d)+\tau)$ such that Algorithm $\csosthree$ outputs an SOHS
  decomposition of $f$.
%Let $\delta,\hat\delta$, and $R$ be given as in Proposition~\ref{prop:deltaR}.
  The bitsize of the output coefficients is
  upper bounded by the output bitsize of the $LDL^T$ decomposition of the matrix
  $Q$, that is $O(\hat\delta (d+1)^3) =\widetilde{O}(d^6(d+\tau))$. The running
  time is estimated as for $\csostwo$.
%, i.e, $\csosthree$ runs in boolean time $\widetilde{O}(d^{13}(d+\tau)^2)$.
\end{proof}

\section{Practical experiments}
This section is dedicated to experimental results obtained by running our three certification algorithms, $\csosone$, $\csostwo$ and $\csosthree$, stated in Section \ref{sec:csos1}, \ref{sec:csos2} and \ref{sec:csos3} respectively.
First, we compare their performance to certify that trigonometric polynomials with Gaussian coefficients are positive on the unit circle $\ucircle$.
Next, we describe how to extend our third algorithm $\csosthree$ to design a finite impulse response (FIR) filter in a certified fashion.
Our code is implemented in {\sc Julia}, freely available online \footnote{\url{https://homepages.laas.fr/vmagron/files/csos.zip}}, and the results are obtained on an Intel Xeon 6244 CPU (3.6GHz) with 1.5 TB of RAM.
In $\csosone$ and $\csostwo$, we compute $\varepsilon$ such
that $f-\varepsilon$ is positive on $\ucircle$ in Lines \ref{line:while0}--\ref{line:endwhile0} by using {\sc msolve} \cite{msolve}  within the Julia library $\mathtt{GroebnerBasis.jl}$.
The corresponding running time is denoted by
$t_{\varepsilon}$.
We denote by $t_{u}$ the running time spent to compute the remainder polynomial $u$ and to perform the comparison involving its coefficients and $\varepsilon$.
In $\csosone$, we compute approximate roots of $f-\varepsilon$ with the arbitrary-precision library $\mathtt{PolynomialRoots.jl}$ \cite{skowron2012general}.
In $\csostwo$ and $\csosthree$, we model SDP \eqref{eq:posGram} via {\sc JuMP} \cite{dunning2017jump} and solve it with Mosek \cite{andersen2000mosek}.
Exact arithmetic is performed with the {\sc Calcium} library available in $\mathtt{Nemo.jl}$.

\subsection{Positivity verification}
\if The first family contains polynomials having real coefficients $f_d(z)=1+2d+\sum_{k=1}^{d}( z^k+\frac{1}{z^k}).$
Because of
$z^k+z^{-k}=2 \cos(k\varphi) \geq -2$ for $z \in \ucircle$, $f_d$ is
 positive on $\ucircle$.
 \fi
We consider a family of trigonometric polynomials having Gaussian integer coefficients $f_d = 10d + \sum_{k=1}^{d}((1-i)z^{-k}+(1+i)z^k)$.
For each $d \in \{50,100,150,200,250\}$, we prove that $f_d$ is positive on $\ucircle$ by computing via $\csosone$, $\csostwo$ and $\csosthree$ exact SOHS decompositions.
Note that each such $f_d$ is positive on $\ucircle$ since $ z^{-k}+z^k \geq -2$.

For $\csosone$, we use a precision $\delta = 64$ (bits) to isolate complex roots.
As a side note, we were not able to use arbitrary-precision SDP solvers (e.g., SDPA-GMP) within $\csostwo$ and $\csosthree$, because {\sc JuMP} only allows us to rely on double floating-point arithmetic at the moment.
The running times (in seconds) of the 3 algorithms are reported in Table \ref{tbl:table}.
As expected from the theoretical bit complexity results from Theorem \ref{thm:bit1} and Theorem \ref{thm:bit2}, Algorithm $\csosone$ performs better than $\csostwo$ and $\csosthree$.
The reason why $\csostwo$ is faster than $\csosthree$ is due to the fact that the latter algorithm requires to perform  an exact Cholesky's factorization.
Even though $\csosone$ happens to be the best choice to verify the positivity of polynomials with known coefficients, the use of an SDP solver is mandatory to optimize over positive polynomials with unknown coefficients, as demonstrated in the next subsection.

\vspace*{-0.2cm}

\begin{table}[ht!]
\begin{tabular}{c|ccc|ccc|c|}
    \cline{2-8}
    & \multicolumn{3}{c|}{$\csosone$}                                               & \multicolumn{3}{c|}{$\csostwo$}                                               & $\csosthree$ \\ \hline
    \multicolumn{1}{|c|}{$d$} & \multicolumn{1}{c|}{$t_{\varepsilon}$} & \multicolumn{1}{c|}{$t_{u}$} & total & \multicolumn{1}{c|}{$t_{\varepsilon}$} & \multicolumn{1}{c|}{$t_{u}$} & total & total        \\ \hline
    \multicolumn{1}{|c|}{50}  & \multicolumn{1}{c|}{0.2}               & \multicolumn{1}{c|}{0.3}     & 0.6   & \multicolumn{1}{c|}{0.2}               & \multicolumn{1}{c|}{6.6}      & 6.8    & 7.7           \\ \hline
    \multicolumn{1}{|c|}{100} & \multicolumn{1}{c|}{1.6}               & \multicolumn{1}{c|}{2.9}     & 4.5   & \multicolumn{1}{c|}{1.6}               & \multicolumn{1}{c|}{128}     & 130   & 184          \\ \hline
    \multicolumn{1}{|c|}{150} & \multicolumn{1}{c|}{5.2}               & \multicolumn{1}{c|}{13}     & 19   & \multicolumn{1}{c|}{5.2}               & \multicolumn{1}{c|}{830}     & 838   & 1460         \\ \hline
    \multicolumn{1}{|c|}{200} & \multicolumn{1}{c|}{24}                & \multicolumn{1}{c|}{26}      & 51    & \multicolumn{1}{c|}{24}                & \multicolumn{1}{c|}{3460}    & 3485  & 7214         \\ \hline
\multicolumn{1}{|c|}{250} & \multicolumn{1}{c|}{64}                &  \multicolumn{1}{c|}{55}      & 120   & \multicolumn{1}{c|}{64}                & \multicolumn{1}{c|}{10553}    & 10622   &   24852      \\ \hline
\end{tabular}
\caption{Performance of Algorithms $\csosone$, $\csostwo$, and $\csosthree$}
\label{tbl:table}
\end{table}

%The precision parameter $\delta$ in Algorithm $\csosone$ has a noticeable effect to the computation time. Computing $F$ as in Line \ref{line:F} takes more time when $\delta$ is increasing. Look at the case $d=150$ of $\csosone$, the number of the iterations of the while loop is two. In the first iteration with $\delta = 32$, computing $F$ takes 35 seconds, and it takes 520 seconds for the second iteration with $\delta = 1024$.

\vspace*{-0.6cm}

\subsection{Design of a certified linear-phase FIR filter}\label{subsec:filter}

This section is devoted to the design of a linear-phase finite impulse response (FIR) filter.
This boils down to solving an energy minimization problem.
To obtain a certified filter, we first solve a semidefinite optimization  problem (corresponding to SDP (5.12) from \cite{dumi2017}) and transform the numerical output into an exact certificate via a projection procedure similar to the one used in $\csosthree$.

Let $H(z)=\sum_{k=-d}^{d} h_kz^{-k}$ be an FIR filter of order $d$, with real coefficients.
Let $h = [h_0,\dots,h_d]$ be the coefficient vector of $H$.
Since we work on the unit circle, we have $z = \exp(i \omega)$, for $\omega \in \R$, and we abuse notation by writing $H(\omega)$ instead of $H(z)$.
The passband and stopband are respectively $[0,\omega_p]$ and $[\omega_s,\pi]$, where $\omega_p, \omega_s$ are given. The stopband energy of the FIR filter is
$$E_s = \frac{1}{\pi}\int_{\omega_s}^{\pi}|h(\omega)|^2d\omega.$$
To design such a linear-phase filter, we minimize the stopband energy under modulus constraints involving two parameters $\gamma_p,\gamma_s$:
\begin{align*}
\min\limits_{ H\in \Hi[z]} & \ E_s \\
  \text{s.t.} \ \ \ & \ |H(\omega)-1|\leq \gamma_p, \ \forall \omega\in[0,\omega_p],\\
  & \ | H(\omega)|\leq \gamma_s, \qquad \forall \omega\in[\omega_s,\pi] \,.
\end{align*}
As shown in \cite[\S~5.1.1]{dumi2017}, the above problem can be reformulated as:
\begin{equation}\label{eq:sdp}
    \begin{array}{cl}
    \min\limits_{h,\, Q_1,\dots,\, Q_7} & \ h^T \tilde{C} h \\
& (1+\gamma_p) 1_{k=0}   - {h}_k = L_k(Q1) \,, \\
    & {h}_k - (1-\gamma_p)1_{k=0}  = L_{k,0,\omega_p}(Q_2,Q_3) \,,\\
     \text{s.t.} & \gamma_s 1_{k=0}   - {h}_k  = L_{k,\omega_s,\pi}(Q_4,Q_5) \,,\\
    & \gamma_s 1_{k=0}   + \, {h}_k = L_{k,0,\omega_p}(Q_6,Q_7) \,, \quad k = 0,\dots,d \,, \\
    &  Q_1 \succeq 0,\dots,Q_7 \succeq 0, \\
\end{array}
\end{equation}
where $Q_1,Q_2,Q_4,Q_6$ are real $(d+1)\times(d+1)$-matrices, $Q_3,Q_5,Q_7$ are real $(d-1)\times(d-1)$-matrices; $L_k(A):=\trace (\Theta_k A)$ and
\begin{equation}\label{eq:Lk}
    \begin{array}{ll}
        L_{k,\alpha,\beta}(A,B): = & \trace (\Theta_k A) + \trace \Big( \big(\frac{a+b}{2} (\Phi_{k-1} + \Phi_{k+1})
        \\
        & -(ab+\frac{1}{2}) \Phi_k - \frac{1}{4}(\Phi_{k-2} + \Phi_{k+2})\big)B\Big),
    \end{array}
\end{equation}
with $a = \cos \alpha, b=\cos \beta$; $\Theta_k\in\R^{(d+1)\times(d+1)}, \Phi_k\in\R^{(d-1)\times(d-1)}$ are the elementary Toeplitz matrices with ones on the $k$-th diagonal and zeros elsewhere (they are zero matrices whenever $k$ is out of range); $C={\rm Toep}(c_0,\dots,c_d)$ is the Toeplitz matrix with the first row $(c_0,\dots,c_d)$, where
$$c_k= \begin{cases}
1-\frac{\omega_s}{\pi}, & \text{ if } \  k=0,\\
-\frac{\sin k\omega_s}{k\pi}, & \text{ if } \ k > 0,
\end{cases}
$$

$$\tilde{C} = P^TCP\succeq 0, \
P=\begin{bmatrix}
  0  & J_d \\
   1 & 0 \\
   0 & I_d
\end{bmatrix} \,,$$
where $J_d$ and  $I_d$ denote the counter identity and identity matrices of size $d$, respectively.
By contrast with the unconstrained case (Algorithm $\csosthree$), this program involves 7 Gram real matrix
variables and $d+1$ real variables $h_0, \dots,h_d$, which are the coefficients of the polynomial corresponding to the filter.

After solving \eqref{eq:sdp}, we obtain numerical values for the coefficients of ${h}$ and the entries of $Q_1,\dots,Q_7$, which are further rounded to $\hat{h}$ and $\hat Q_1,\dots, \hat Q_7$.
To project $\hat Q_1$ to a matrix $Q_1$ satisfying the first set of equality constraints in SDP \eqref{eq:sdp}, we apply the formula in Line \ref{line:projPP} of Algorithm $\csosthree$ after replacing $f_k$ by  $p_k:=(1+\gamma_p)1_{k=0} - \hat{h}_k$.
Similarly, we obtain matrices $Q_2$ and $Q_3 := \hat{Q_3}$ satisfying the second set of equality constraints in SDP \eqref{eq:sdp}, after substitution by
\if
Solving \eqref{eq:sdp} yields numerical approximations $\tilde {h}$, $\tilde Q_1,\dots,\tilde Q_7$, which are further rounded to $\hat{h}$, $\hat Q_1,\dots, \hat Q_7$.
To project $\hat Q_1$ to a matrix $Q_1$ satisfying exactly the first set of equality constraints in \eqref{eq:sdp}, we apply the formula in Line \ref{line:projPP} of Algorithm $\csosthree$ after replacing $f_k$ by  $p_k:=(1+\gamma_p)1_{k=0} - \hat{h}_k$.
Similarly, we obtain matrices $Q_2$ and $Q_3 := \hat{Q_3}$ satisfying exactly the second set of equality constraints in SDP \eqref{eq:sdp}, after substitution by
\fi
%we use the same , but in Line \ref{line:projPP}, we replace  $f_k$ by $p_k$, where $p_k$ is the difference of the left-hand side of the second equality constraint and the second term in the right-hand side of \eqref{eq:Lk}, i.e.,
\begin{equation*}
    \begin{array}{ll}
& \hat{h}_k - (1-\gamma_p)1_{k=0} - \trace \Big( \big(\frac{a+b}{2} (\Phi_{k-1} + \Phi_{k+1})
-(ab+\frac{1}{2}) \Phi_k \\
        & - \frac{1}{4}(\Phi_{k-2} + \Phi_{k+2})\big){Q}_3\Big),
    \end{array}
\end{equation*}
%and $\hat Q$ by $\hat Q_2$.
Eventually, similar projection steps provide the  remaining matrices $Q_4, \dots, Q_7$ so that all equality constraints in  \eqref{eq:sdp} hold exactly.

%After the projection steps, the equality constraints in \eqref{eq:sdp} will hold exactly.

As in \cite[Example 5.1]{dumi2017}, we design a  filter with parameters $d=25$, $\omega_p=\pi/5$, $\omega_s=\pi/4$, $\gamma_p=1/10$ (corresponding
to a passband ripple of 1.74 dB), and $\gamma_s=0.0158$ (a stopband attenuation of 36 dB).
%By implementing the above procedure in {\sc Julia} and using the SDP solver Mosek within {\sc JuMP},
We first obtain a numerical lower bound of the stopband energy $E'_s = 4.461501\times~10^{-5}$.
However, this bound happens to be inexact as the Gram matrices obtained after the projection step are not positive semidefinite anymore.
To overcome this certification issue, we replace the last constraint in \eqref{eq:sdp} by $Q_7 - 10^{-9}I_{24} \succeq 0$.
Doing so, we can successfully project the approximate Gram matrices into exact ones with positive eigenvalues, and obtain a certified exact lower bound  of $E_s = 4.461503\times 10^{-5}$ in $0.74$ seconds.
\vspace{-0.3cm}
\section{Conclusion and discussion}
\label{sec:conclusion}
We have designed three algorithms of polynomial bit complexity to compute
weighted sums of Hermitian squares decompositions for trigonometric univariate
polynomials positive on the unit circle with Gaussian coefficients. 
%
%Our analysis is deeply inspired from \cite{univsos}.  
Note that positivity of such a trigonometric polynomial $f$ is equivalent to that of a polynomial $a_0 + \sum_{k=1}^d a_k \cos(k t) + b_k \sin(k t)$ for all $t \in [0,2 \pi]$, where $a_k,b_k$ are rational coefficients obtained from the coefficients $f_i$.
In turn, if we do the change of variables $t=2 \arctan(x)$, the trigonometric polynomial becomes a rational function, whose denominator is a power of $(1+x^2)$. 
Thus this boils down to proving positivity of a real univariate polynomial, and one can apply the methods from \cite{univsos}.
We also leave the situation where the input polynomial vanishes on the unit
circle for future investigation.

\noindent
{\em Acknowledgments.} We thank the anonymous referees for their remarks and
suggestions.    
This work has been supported by European Union's Horizon 2020 research and
    innovation programme under the Marie Sk\l{}odowska--Curie Actions, grant
    agreement 813211 (POEMA). \\
\vspace{-0.3cm}
%\bibliographystyle{ACM-Reference-Format}
%\bibliography{cunivsos}
%%% -*-BibTeX-*-
%%% Do NOT edit. File created by BibTeX with style
%%% ACM-Reference-Format-Journals [18-Jan-2012].

%%
%% If your work has an appendix, this is the place to put it.
%\appendix

\end{document}